\newtheorem{theorem}{Theorem}
\newtheorem{lemma}{Lemma}
\newtheorem{remark}{Remark}
\begin{document}
\title{On the Secrecy Performance of Random VLC Networks with Imperfect CSI and Protected Zone}

\author{Jin-Yuan Wang, Yu Qiu, Sheng-Hong Lin, Jun-Bo Wang, Min Lin, and Cheng Liu
\thanks{This work was supported in part by National Natural Science Foundation of China under Grants 61701254 and 61571115, the Natural
Science Foundation of Jiangsu Province under Grant BK20170901, the Key International Cooperation Research Project under Grant 61720106003,
and the open research fund of Key Lab of Broadband Wireless Communication and Sensor Network Technology (Nanjing University of Posts and Telecommunications), the Ministry of Education under Grants JZNY201706 and JZNY201701. \emph{(Corresponding author: Jin-Yuan Wang.)}}
\thanks{Jin-Yuan Wang, Sheng-Hong Lin, and Min Lin are with Key Lab of Broadband Wireless Communication and Sensor Network Technology, Nanjing University of Posts and Telecommunications, Nanjing 210003, China. (E-mail: jywang@njupt.edu.cn, linshenghong5061@163.com, linmin@njupt.edu.cn)}
\thanks{Yu Qiu and Jun-Bo Wang are with National Mobile Communications Research Laboratory, Southeast University, Nanjing 210096, China. (E-mail: 220170896@seu.edu.cn, jbwang@seu.edu.cn)}
\thanks{Cheng Liu is with Vertical Industry Development Department, Baidu China Co., Ltd., Shanghai 201210, China. (E-mail: 220160880@seu.edu.cn)}
}

\maketitle
\begin{abstract}
This paper investigates the physical-layer security for a random indoor visible light communication (VLC) network with imperfect channel state information (CSI) and a protected zone. The VLC network consists of three nodes, i.e., a transmitter (Alice), a legitimate receiver (Bob), and an eavesdropper (Eve). Alice is fixed in the center of the ceiling, and the emitted signal at Alice satisfies the non-negativity and the dimmable average optical intensity constraint. Bob and Eve are randomly deployed on the receiver plane. By employing the protected zone and considering the imperfect CSI, the stochastic characteristics of the channel gains for both the main and the eavesdropping channels is first analyzed. After that, the closed-form expressions of the average secrecy capacity and the lower bound of secrecy outage probability are derived, respectively. Finally, Monte-Carlo simulations are provided to verify the accuracy of the derived theoretical expressions. Moreover, the impacts of the nominal optical intensity, the dimming target, the protected zone and the imperfect CSI on secrecy performance are discussed, respectively.
\end{abstract}

\begin{keywords}
Visible light communications,
Average secrecy capacity,
Secrecy outage probability,
Imperfect CSI,
Randomly deployed receivers,
Protected zone.
\end{keywords}

\IEEEpeerreviewmaketitle

\section{Introduction}
\label{section1}
Recent advances in light-emitting diodes (LEDs) are deriving a resurgence into the use of visible-light for wireless communications. This novel technology is called visible light communications (VLC) \cite{BIB01}. While the radio frequency wireless communications (RFWC) serve outdoor users or fast moving vehicular users, VLC can serve indoor users in future fifth generation (5G) wireless communications. Therefore, in future 5G technologies, VLC is considered to be a compelling technology for supplementing RFWC \cite{BIB02}.

To improve system performance, a lot of work has been presented in the last decade to investigate VLC in terms of channel modelling \cite{BIB03}, channel capacity analysis \cite{BIB04}, modulation \cite{BIB05}, coding \cite{BIB06}, indoor positioning \cite{BIB07}, underwater communication \cite{BIB08}, and hardware design \cite{BIB09}. Although extensive work on VLC has been performed, most of them focus on the point-to-point (P2P) communications. Nowadays, the research focus is being changed from P2P communications to networking aspects. Under VLC network scenarios, the data privacy and confidentiality are becoming more and more important for users. However, in many practical VLC scenarios, multiple LEDs are often employed for better illumination and thus the eavesdroppers are able to perform interception as long as they are in the area illuminated by one or multiple LEDs. Moreover, even when eavesdroppers are not allowed to access the specified area, they may still be capable of intercepting the information through the structure of the physical environment, such as keyholes and windows \cite{BIB10}. Therefore, the security is still a crucial issue for VLC, even if it is more advantageous than RFWC in terms of inherent security.

Recently, the physical-layer security (PLS) has emerged as a promising solution to protect information delivery from eavesdropping. In RFWC, the PLS has been extensively investigated. However, the derived results in RFWC cannot be directly applied to VLC for the following reasons \cite{BIB11}: (a) The transmitted optical intensity signal in VLC must be non-negative, while the signals in RFWC are with bipolar complex values; (b) The illumination and communication are simultaneously implemented in VLC, while only communication is considered in RFWC; and (c) the dimming control is an important consideration in VLC for power savings and energy efficiency, which is not considered in RFWC. Compared to the numerous studies in RFWC, few work has been presented for PLS in VLC. As a fundamental performance indicator, the secrecy capacity of VLC has been investigated \cite{BIB12,BIB13,BIB14,BIB15,BIB16}. In \cite{BIB12}, the secrecy capacity for a direct current based multiple input single output (MISO) VLC is analyzed, and a uniform input distribution is employed to derive the secrecy capacity bound. By using a truncated generalized normal (TGN) input distribution, the secrecy capacity and secure beamforming for MISO VLC are investigated in \cite{BIB13} and \cite{BIB14}, respectively. Considering the signal constraints of VLC, the uniform and TGN input distributions are generally not optimal. By using the variational method, an improved input distribution can be obtained \cite{BIB04}. By employing the variational method, three lower bounds of secrecy capacity for VLC are obtained in \cite{BIB15}, where the upper bounds are obtained in \cite{BIB16}. Note that refs. \cite{BIB12,BIB13,BIB14,BIB15,BIB16} focus on the fixed transceivers. To achieve a better understanding of the inherent security capabilities of VLC, more practical conditions (such as the distribution of randomly deployed receivers) should be considered. By employing the uniformly distributed receivers, the secrecy outage probability (SOP) for the hybrid VLC/RFWC system is studied in \cite{BIB17}. With spatially random terminals in VLC, the closed-form expressions for the SOP and the average secrecy capacity (ASC) are derived in \cite{BIB18}. In \cite{BIB19} and \cite{BIB20}, the SOP and beamforming are analyzed respectively for VLC with randomly located eavesdroppers. However, the dimming requirement for indoor VLC is not considered in \cite{BIB17,BIB18,BIB19,BIB20}. Moreover, the aforementioned works are carried out by assuming that the channel state information (CSI) is perfect. In practice, it is difficult to obtain perfect CSI because of channel estimation and quantization errors.

In this paper, the PLS for indoor VLC is further investigated. A more realistic three-node secure VLC network is established, which includes a transmitter, a legitimate receiver, and an eavesdropper. At the transmitter, the transmit visible-light signal satisfies the non-negativity and the dimmable average optical intensity constraint. To improve the secure performance, an eavesdropper-free protected zone is employed. Both the legitimate receiver and the eavesdropper are uniformly distributed. Moreover, the imperfect CSI at the receiver (CSIR) is considered for both the main and the eavesdropping channels. The main contribution of this paper is the derivation of closed-form expressions of the ASC and the lower bound of SOP for the VLC network with the imperfect CSIR and the protected zone. Based on the derived theoretical expressions, some interesting insights are provided. Finally, Monte-Carlo simulations are provided to verify the accuracy of the derived theoretical expressions.

The rest of this paper is organized as follows.
Section \ref{section2} presents the system model.
In Section \ref{section3}, the random channel characteristics is shown.
Then, Section \ref{section4} and Section \ref{section5} derive the closed-from expressions of the ASC and the lower bound of SOP, respectively. Numerical results are provided in Section \ref{section6}. Finally, Section \ref{section7} concludes the paper.

\section{System Model}
\label{section2}
As illustrated in Fig. \ref{fig1}, we consider an indoor VLC network, which includes a transmitter (i.e., Alice),
a legitimate receiver (i.e., Bob), and an eavesdropping receiver (i.e., Eve).
For Alice, one LED is employed as the lighting source,
which is installed in the center of the ceiling to ensure the illumination and transfer information to Bob.
When Alice transmits information to Bob, Eve as an eavesdropper can also receive the information.
At both Bob and Eve, the photodiode (PD) is employed to receive the information and perform the optical-to-electrical convention.

\begin{figure}
\centering
\includegraphics[width=7cm]{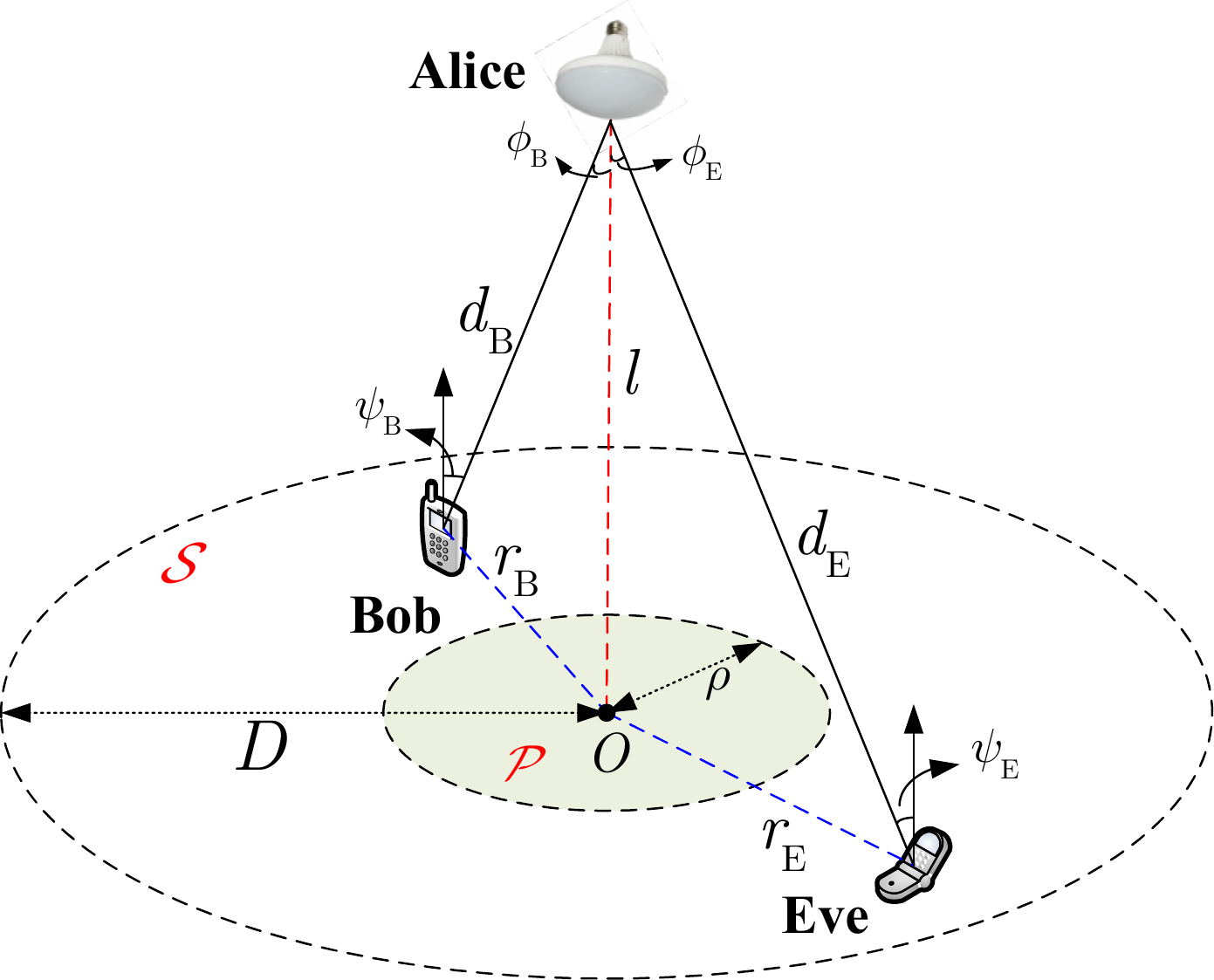}
\caption{An indoor VLC network with a protected zone.}
\label{fig1}
\end{figure}

To facilitate the analysis, the receiver zone ${\cal S}$ is assumed to be a disc with radius $D$,
and the projection of Alice on the receiver zone is the center of the disc (i.e., point $O$), as shown in Fig. \ref{fig1}.
To improve the secure performance, a protected zone ${\cal P}$ is employed \cite{BIB21}, which is also a disc with center $O$ and radius $\rho $.
The protected zone is an eavesdropper-free area, which is formed either inherently or intentionally \cite{BIB22}.
Therefore, we assume that Bob is uniformly distributed in zone ${\cal S}$, and Eve is uniformly distributed in zone ${\cal S}\backslash {\cal P}$. According to the above setup, the probability density functions (PDFs) of the positions of Bob and Eve are, respectively, given by
\begin{equation}
\left\{ \begin{array}{l}
{f_U}(u) = \frac{1}{{\pi {D^2}}},\;u \in {\cal S}\\
{f_W}(w) = \frac{1}{{\pi ({D^2} - {\rho ^2})}},\;w \in {\cal S}\backslash {\cal P}
\end{array} \right.,
\label{eq1}
\end{equation}
where $U$ and $W$ denote the positions of Bob and Eve.

Without loss of generality, we assume that both Bob and Eve can be illuminated by the LED, and thus the incidence angle ${\psi _k}$ ($k = {\rm{B}}$ for Bob, and $k = {\rm{E}}$ for Eve) cannot exceed the field of view of the PD ${\Psi _{\rm{c}}}$, i.e., $0 \le {\psi _k} \le {\Psi _{\rm{c}}}$. In indoor VLC, the channel gain ${H_k}$ can be written as \cite{BIB03}
\begin{equation}
{H_k} = \frac{{(m + 1)A}}{{2\pi d_k^2}}{T_s}g{\cos ^m}({\phi _k})\cos ({\psi _k}),\;k = {\rm{B}}\;{\rm{or}}\;{\rm{E,}}
\label{eq2}
\end{equation}
where ${d_k}$ and ${\phi _k}$ denote the distance and the irradiance angle between Alice and Bob (or Eve); $A$ denotes the physical area of the PD, $m$ is the order of the Lambertian emission,
$g$ is the concentrator gain of the PD, $T_s$ is the optical filter gain.

Moreover, the normal vectors of the transceiver planes are supposed to be perpendicular to the ceiling, and thus $\cos ({\phi _k}){\rm{ = }}\cos ({\psi _k}){\rm{ = }}l/{d_k}$, where $l$ is the vertical height between Alice and the receiver plane. Therefore, the channel gain in (\ref{eq2}) can be further written as
\begin{equation}
{H_k} = \frac{{(m + 1)A{T_s}g{l^{m{\rm{ + 1}}}}}}{{2\pi }}{(r_k^{\rm{2}} + {l^2})^{ - \frac{{m + 3}}{2}}},\;k = {\rm{B}}\;{\rm{or}}\;{\rm{E,}}
\label{eq3}
\end{equation}
where ${r_k}$ is the distance between the projection point $O$ and the $k$-th receiver, as shown in Fig. \ref{fig1}.

Owing to the channel estimation error, the CSIR is imperfect. According to the ellipsoidal approximation, the CSI uncertainty can be deterministically modeled as \cite{BIB23}
\begin{equation}
{\hat H_k} = {10^{\frac{{{\eta _k}}}{{10}}}}{H_k},\;\left| {{\eta _k}} \right| \le \varepsilon ,k = {\rm{B}}\;{\rm{or}}\;{\rm{E,}}
\label{eq4}
\end{equation}
where ${\eta _k}$ is the uncertain parameter, $\varepsilon$ is the uncertainty bound.

Therefore, the received signals at Bob and Eve can be written as
\begin{equation}
\left\{ \begin{array}{l}
{Y_{\rm{B}}} = {{\hat H}_{\rm{B}}}X + {Z_{\rm{B}}}\\
{Y_{\rm{E}}} = {{\hat H}_{\rm{E}}}X + {Z_{\rm{E}}}
\end{array} \right.,
\label{eq5}
\end{equation}
where $X$ denotes the input optical intensity signal; ${Z_{\rm{B}}} \sim N(0,\sigma _{\rm{B}}^{\rm{2}})$ and ${Z_{\rm{E}}} \sim N(0,\sigma _{\rm{E}}^{\rm{2}})$ are the additive white Gaussian noises at Bob and Eve, and $\sigma _{\rm{B}}^2$ and $\sigma _{\rm{E}}^2$ are the corresponding noise variances.

In VLC, the intensity modulation and direct detection (IM/DD) are often employed, and thus $X$ in (\ref{eq5}) must satisfy the non-negative constraint, i.e.,
\begin{equation}
X \ge 0.
\label{eq6}
\end{equation}

To satisfy the illumination requirement, the average optical intensity cannot change with time. Therefore, the dimmable average optical intensity constraint is expressed as
\begin{equation}
E(X){\rm{ = }}\xi P,
\label{eq7}
\end{equation}
where $\xi  \in (0,1]$ is the dimming target, $P$ is the nominal optical intensity of the LED.

\section{Channel Characteristics Analysis}
\label{section3}
In this paper, Bob and Eve are spatially random deployed receivers. The stochastic characteristics of the channel gains for both the main channel and the eavesdropping channel will be analyzed in this section.

According to Fig. \ref{fig1}, eqs. (\ref{eq1}) and (\ref{eq3}), the PDFs of ${H_{\rm{B}}}$ and ${H_{\rm{E}}}$ are derived in the following theorem.

\begin{theorem}
When Bob is uniformly distributed in zone ${\cal S}$ and Eve is uniformly distributed in zone ${\cal S}\backslash {\cal P}$, the PDFs of ${H_{\rm{B}}}$ and ${H_{\rm{E}}}$ are given, respectively, by
\begin{equation}
{f_{{H_{\rm{B}}}}}(h){\kern 1pt} {\rm{ = }}{\Xi _1}{h^{ - \frac{2}{{m + 3}} - 1}},\;\;{v_{\min }} \le h \le {v_{\max }},
\label{eq8}
\end{equation}
and
\begin{equation}
{f_{{H_{\rm{E}}}}}(h){\kern 1pt} {\rm{ = }}{\Xi _{\rm{2}}}{h^{ - \frac{2}{{m + 3}} - 1}},\;\;{v_{\min }} \le h \le {v_{{\mathop{\rm m}\nolimits} {\rm{id}}}},
\label{eq9}
\end{equation}
where ${\Xi _{\rm{1}}}$, ${\Xi _{\rm{2}}}$, ${v_{\min }}$, ${v_{\rm mid }}$ and ${v_{\max }}$ are expressed as
\begin{equation}
{\Xi _{\rm{1}}} = \frac{2}{{(m + 3){D^2}}}{\left( {\frac{{(m + 1)A{T_s}g{l^{m{\rm{ + 1}}}}}}{{2\pi }}} \right)^{\frac{2}{{m + 3}}}},
\label{eq10}
\end{equation}

\begin{equation}
{\Xi _{\rm{2}}} = \frac{2}{{(m + 3)({D^2} - {\rho ^2})}}{\left( {\frac{{(m + 1)A{T_s}g{l^{m{\rm{ + 1}}}}}}{{2\pi }}} \right)^{\frac{2}{{m + 3}}}},
\label{eq11}
\end{equation}

\begin{equation}
{v_{\min }} = \frac{{(m + 1)A{T_s}g{l^{m{\rm{ + 1}}}}}}{{2\pi }}{({D^2} + {l^2})^{ - \frac{{m + 3}}{2}}},
\label{eq12}
\end{equation}

\begin{equation}
{v_{{\mathop{\rm m}\nolimits} {\rm{id}}}}{\rm{ = }}\frac{{(m + 1)A{T_s}g{l^{m{\rm{ + 1}}}}}}{{2\pi }}{({\rho ^2} + {l^2})^{ - \frac{{m + 3}}{2}}},
\label{eq13}
\end{equation}
and
\begin{equation}
{v_{\max }}{\rm{ = }}\frac{{(m + 1)A{T_s}g{l^{m{\rm{ + 1}}}}}}{{2\pi }}{l^{ - (m + 3)}}.
\label{eq14}
\end{equation}
\label{them1}
\end{theorem}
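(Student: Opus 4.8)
The plan is to exploit the fact that, for either receiver, the channel gain in (\ref{eq3}) is a \emph{deterministic, strictly decreasing} function of the single radial coordinate $r_k$, the horizontal distance from the projection point $O$. Hence the randomness of ${H_k}$ is inherited entirely from the random position of the receiver through $r_k$, and the derivation splits into two routine steps: first obtain the PDF of $r_k$ induced by the uniform spatial distribution, then transfer it through the monotone map (\ref{eq3}) by a standard change of variables. Note that the imperfect-CSI scaling (\ref{eq4}) plays no role here, since the theorem concerns $H_{\rm B},H_{\rm E}$ rather than their estimates.

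First I would determine the radial PDFs. Since Bob is uniform on the disc ${\cal S}$ of radius $D$, the probability that $r_{\rm B}\le r$ equals the area ratio $r^2/D^2$, and differentiating gives
\begin{equation}
f_{r_{\rm B}}(r)=\frac{2r}{D^2},\quad 0\le r\le D.
\label{eqP1}
\end{equation}
Likewise, Eve is uniform on the annulus ${\cal S}\backslash{\cal P}$, for which the area ratio is $(r^2-\rho^2)/(D^2-\rho^2)$, yielding
\begin{equation}
f_{r_{\rm E}}(r)=\frac{2r}{D^2-\rho^2},\quad \rho\le r\le D.
\label{eqP2}
\end{equation}

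Next I would invert the map. Writing the constant prefactor as $C=(m+1)AT_sg\,l^{m+1}/(2\pi)$, relation (\ref{eq3}) reads $h=C\,(r^2+l^2)^{-(m+3)/2}$, which inverts to
\begin{equation}
r^2+l^2=\left(\frac{C}{h}\right)^{\frac{2}{m+3}}.
\label{eqP3}
\end{equation}
Differentiating (\ref{eqP3}) gives $2r\,|dr/dh|=\frac{2}{m+3}\,C^{\frac{2}{m+3}}h^{-\frac{2}{m+3}-1}$, and applying $f_{H_k}(h)=f_{r_k}(r)\,|dr/dh|$ with (\ref{eqP1}) causes the factor $r$ to cancel exactly, leaving the pure power law $f_{H_{\rm B}}(h)=\Xi_1 h^{-2/(m+3)-1}$ with $\Xi_1$ as in (\ref{eq10}); repeating with (\ref{eqP2}) gives $\Xi_2$ of (\ref{eq11}). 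The support is then read off from monotonicity: as $r$ sweeps its interval, $h$ decreases, so evaluating (\ref{eq3}) at the endpoints $r=0,\rho,D$ produces exactly $v_{\max}$, $v_{\rm mid}$ and $v_{\min}$ in (\ref{eq12})--(\ref{eq14}), giving the stated ranges $[v_{\min},v_{\max}]$ for Bob and $[v_{\min},v_{\rm mid}]$ for Eve.

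I do not expect a genuine obstacle here; the computation is a single change of variables. The only points demanding care are bookkeeping ones: tracking that (\ref{eq3}) is \emph{decreasing}, so that the endpoints of the $r$-interval map to the correct endpoints of the $h$-interval (with $r=0$, the point closest to Alice, giving the largest gain $v_{\max}$), and correctly handling the fractional exponent $2/(m+3)$ so that the cancellation of $r$ is transparent and the constants $\Xi_1,\Xi_2$ emerge in closed form.
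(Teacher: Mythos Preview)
Your proposal is correct and follows essentially the same approach as the paper: compute the radial PDFs $f_{r_{\rm B}},f_{r_{\rm E}}$ from the uniform area measures, then push them through the monotone map (\ref{eq3}) via a standard change of variables to obtain the power-law densities with the stated supports. The paper's Appendix~\ref{appa} does precisely this, writing out the CDF of $r_k$ first and then applying the transformation formula, so your plan matches it step for step.
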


\begin{proof}
See Appendix \ref{appa}.
\end{proof}

\begin{remark}
When Bob is uniformly deployed in zone ${\cal S}$ and Eve is uniformly distributed in zone ${\cal S}\backslash {\cal P}$, both (\ref{eq8}) and (\ref{eq9}) are monotonically decreasing functions with respect to $h$.
When $\rho {\rm{ = 0}}$ in (\ref{eq9}), the protected zone disappears, the distribution of ${H_{\rm{E}}}$ is the same as that of ${H_{\rm{B}}}$. Moreover, with the increase of $\rho $, the value of ${v_{{\rm{mid}}}}$ in (\ref{eq9}) decreases,
and thus the curve of ${f_{{H_{\rm{E}}}}}(h){\kern 1pt}$ is compressed to the left.
\label{rem1}
\end{remark}

To justify \emph{Remark \ref{rem1}}, Fig. \ref{fig2} shows the comparisons between the PDF of ${H_{\rm{B}}}$ and the PDF of ${H_{\rm{E}}}$ when $m = {T_s} = g = 1$, $A{\rm{ = 1c}}{{\rm{m}}^2}$, $l = 3{\rm{m}}$ and $D = 5{\rm{m}}$.
According to this figure, it can be observed that the above remark is verified.

\begin{figure}
\centering
\includegraphics[width=7cm]{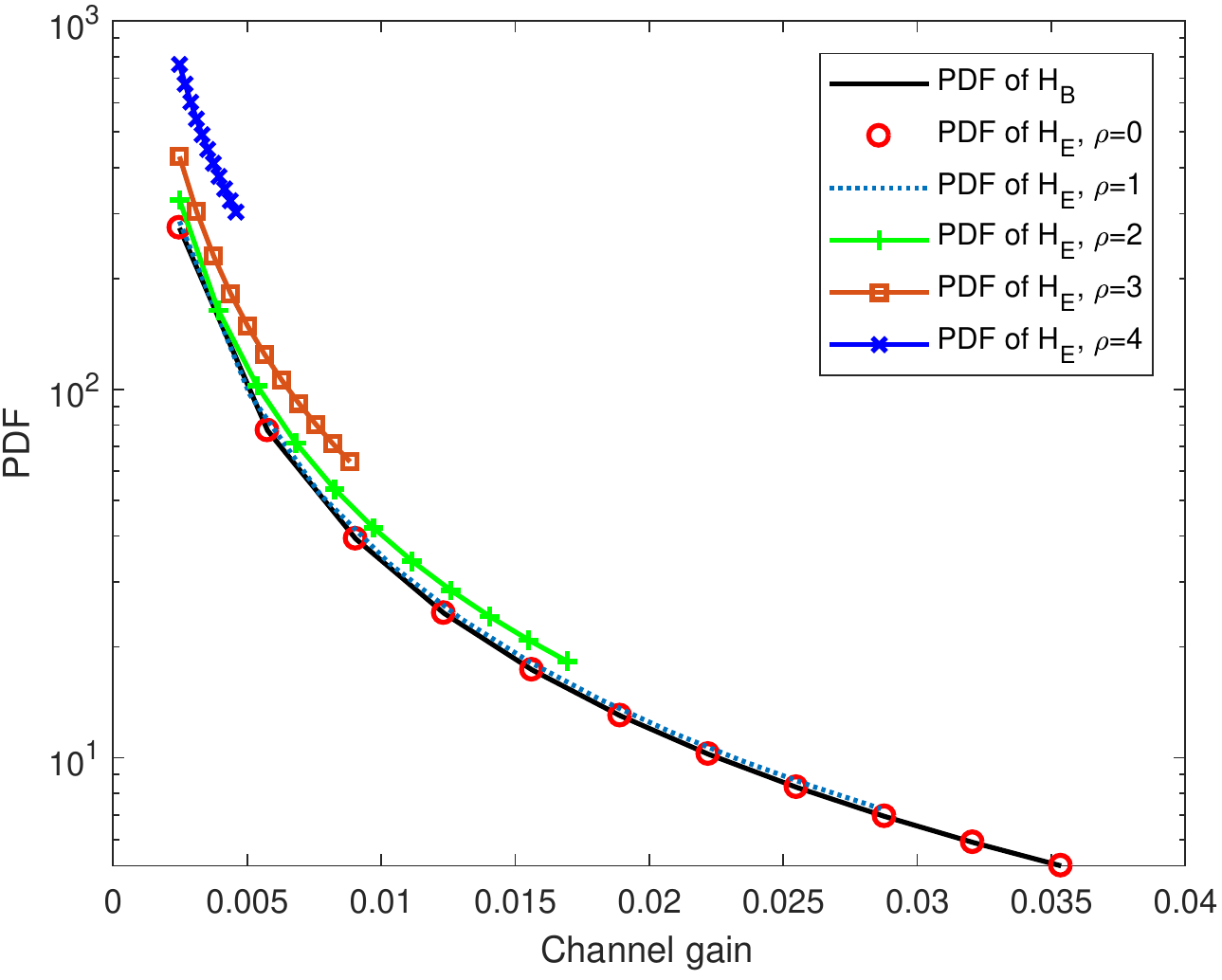}
\caption{Comparisons between the PDF of ${H_{\rm{B}}}$ and the PDF of ${H_{\rm{E}}}$ when $m = {T_s} = g = 1$, $A{\rm{ = 1c}}{{\rm{m}}^2}$, $l = 3{\rm{m}}$ and $D = 5{\rm{m}}$.}
\label{fig2}
\end{figure}

\section{Average Secrecy Capacity Analysis}
\label{section4}
In this section, the ASC performance for the random VLC network will be analyzed. By considering the imperfect CSIR, the randomly deployed receivers and protected zone for VLC, the instantaneous secrecy capacity (SC) cannot be used to evaluate the system performance. Based on the derived instantaneous SC result in our previous work \cite{BIB16}, the ASC will be analyzed to evaluate the secrecy performance. Moreover, some insights will also be provided.

In our previous work \cite{BIB16}, tight lower and upper bounds on the instantaneous SC for indoor VLC has been derived.
Without loss of generality, a lower bound on the instantaneous SC in \cite{BIB16} is employed to analyze the ASC, which is shown in the following Lemma.

\begin{lemma}
For VLC with constraints (\ref{eq6}) and (\ref{eq7}), a lower bound on the instantaneous SC is given by
\begin{equation}
{C_{\rm{s}}} \!\!=\!\! \left\{ \begin{array}{l}\!\!
\frac{1}{2}\ln \left( {\frac{{\sigma _{\rm{E}}^2}}{{2\pi \sigma _{\rm{B}}^2}} \cdot \frac{{e{\xi ^2}{P^2}H_{\rm{B}}^2 + 2\pi \sigma _{\rm{B}}^2}}{{H_{\rm{E}}^2{\xi ^2}{P^2} + \sigma _{\rm{E}}^2}}} \right),\;{\rm{if}}\;\chi '{H_{\rm{B}}} \ge {H_{\rm{E}}}\\
\!\!{\rm{0,}}\;{\rm{otherwise}}
\end{array} \right.\!.
\label{eq15}
\end{equation}
where $\chi ' \buildrel \Delta \over = \sqrt e {\sigma _{\rm{E}}}/(\sqrt {2\pi } {\sigma _{\rm{B}}})$.
\label{lem1}
\end{lemma}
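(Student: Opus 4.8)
The plan is to realize the bound as an achievable secrecy rate for a single, explicitly chosen input distribution, so that $C_{\rm s} \ge I(X;Y_{\rm B}) - I(X;Y_{\rm E})$ holds directly and I avoid solving the full max--min optimization. Since the noises $Z_{\rm B}, Z_{\rm E}$ are additive and independent of $X$, I would first write $I(X;Y_k) = h(Y_k) - h(Z_k)$ for $k \in \{{\rm B},{\rm E}\}$, with $h(Z_k) = \frac{1}{2}\ln(2\pi e\sigma_k^2)$. The secrecy-rate difference then reduces to controlling $h(Y_{\rm B})$ from below and $h(Y_{\rm E})$ from above.

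For the input I would take the exponential distribution with mean $\xi P$. It is admissible because its support is $[0,\infty)$, so the non-negativity constraint (\ref{eq6}) holds, and its mean is exactly $\xi P$, so the dimming constraint (\ref{eq7}) holds; it is moreover the maximum-entropy law for a non-negative variable with fixed mean, which makes it the natural choice. Its differential entropy is $h(X) = 1 + \ln(\xi P)$ and its variance is $\xi^2 P^2$. To lower-bound $h(Y_{\rm B}) = h(H_{\rm B}X + Z_{\rm B})$ I would apply the entropy power inequality, using $e^{2h(H_{\rm B}X)} = H_{\rm B}^2 e^{2h(X)} = e^2 H_{\rm B}^2\xi^2 P^2$ and $e^{2h(Z_{\rm B})} = 2\pi e\sigma_{\rm B}^2$, which yields $I(X;Y_{\rm B}) \ge \frac{1}{2}\ln\frac{e H_{\rm B}^2\xi^2 P^2 + 2\pi\sigma_{\rm B}^2}{2\pi\sigma_{\rm B}^2}$. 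To upper-bound $h(Y_{\rm E})$ I would invoke the Gaussian maximum-entropy property at fixed variance, $h(Y_{\rm E}) \le \frac{1}{2}\ln(2\pi e\,\mathrm{Var}(Y_{\rm E}))$ with $\mathrm{Var}(Y_{\rm E}) = H_{\rm E}^2\xi^2 P^2 + \sigma_{\rm E}^2$, giving $I(X;Y_{\rm E}) \le \frac{1}{2}\ln\frac{H_{\rm E}^2\xi^2 P^2 + \sigma_{\rm E}^2}{\sigma_{\rm E}^2}$. Subtracting these two estimates and merging the logarithms reproduces exactly the upper branch of (\ref{eq15}).

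Finally I would establish the dichotomy in (\ref{eq15}). Since a secrecy rate of zero is always achievable by transmitting nothing, $C_{\rm s}$ is bounded below by the larger of zero and the expression just derived, so it only remains to decide when that expression is non-negative. Viewing the argument of the logarithm as a function of $P^2$, it equals $1$ at $P=0$ and is monotone in $P^2$, the sign of its derivative being governed by $e\sigma_{\rm E}^2 H_{\rm B}^2 - 2\pi\sigma_{\rm B}^2 H_{\rm E}^2$; hence the argument remains at or above $1$ for all $P$ precisely when $e\sigma_{\rm E}^2 H_{\rm B}^2 \ge 2\pi\sigma_{\rm B}^2 H_{\rm E}^2$, i.e. $\chi' H_{\rm B} \ge H_{\rm E}$, and otherwise the trivial bound $0$ is tighter. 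I expect the main obstacle to be the rigorous justification of the entropy power inequality step together with the verification that the monotonicity/sign analysis yields the clean, $P$-independent threshold $\chi'$ rather than a power-dependent condition; the remaining steps are routine algebra, and the underlying instantaneous-SC bound is in any case available from \cite{BIB16}.
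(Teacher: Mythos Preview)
Your proposal is correct, and in fact goes further than the paper itself: the paper does not prove Lemma~\ref{lem1} at all but simply imports it from \cite{BIB16}, as you yourself note in your last sentence. Your reconstruction---exponential input (the maximum-entropy non-negative law with prescribed mean), entropy power inequality to lower-bound $h(Y_{\rm B})$, Gaussian maximum-entropy to upper-bound $h(Y_{\rm E})$, then the monotonicity-in-$P^2$ argument to obtain the $P$-independent threshold $\chi'$---is precisely the standard route and the computations all check out. There is nothing to compare against here; your sketch is the proof the paper defers to the reference.
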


In \emph{Lemma \ref{lem1}}, the lower bound (\ref{eq15}) is derived based on the perfect CSIR.
In this paper, when considering the imperfect CSIR, the lower bound (\ref{eq15}) should be modified as
\begin{equation}
{C_{\rm{s}}} = \left\{ \begin{array}{l}
C_s^ + ,\;{\rm{if}}\;\chi {H_{\rm{B}}} \ge {H_{\rm{E}}}\\
{\rm{0,}}\;\;\;{\rm{otherwise}}
\end{array} \right.,
\label{eq16}
\end{equation}
where $\chi  \buildrel \Delta \over = {\rm{1}}{{\rm{0}}^{\frac{{{\eta _{\rm{B}}} - {\eta _{\rm{E}}}}}{{10}}}}\sqrt e {\sigma _{\rm{E}}}/(\sqrt {2\pi } {\sigma _{\rm{B}}})$, and $C_s^ + $ is given by
\begin{equation}
C_s^ +  = \frac{1}{2}\ln \left( {\frac{{\sigma _{\rm{E}}^2}}{{2\pi \sigma _{\rm{B}}^2}} \cdot \frac{{e{\xi ^2}{P^2}{\rm{1}}{{\rm{0}}^{\frac{{{\eta _{\rm{B}}}}}{5}}}H_{\rm{B}}^2 + 2\pi \sigma _{\rm{B}}^2}}{{{\rm{1}}{{\rm{0}}^{\frac{{{\eta _{\rm{E}}}}}{5}}}H_{\rm{E}}^2{\xi ^2}{P^2} + \sigma _{\rm{E}}^2}}} \right).
\label{eq17}
\end{equation}

According to (\ref{eq16}), the ASC for the random VLC network is defined as
\begin{eqnarray}
\overline {{C_s}} 
 = \int_{{v_{\min }}}^{{v_{{\rm{mid}}}}} {\int_{{v_{\min }}}^{{v_{{\rm{max}}}}} {{C_s}  {f_{{H_{\rm{B}}}}}(x){f_{{H_{\rm{E}}}}}(y){\rm{d}}x{\rm{d}}y} } .
\label{eq18}
\end{eqnarray}
Note that (\ref{eq18}) holds because ${H_{\rm{B}}}$ and ${H_{\rm{E}}}$ are independent of each other.
In (\ref{eq18}), ${C_s}$ is positive for some cases,
while the value of ${C_s}$ is zero for other cases.
Moreover, the value of (\ref{eq18}) also depends on the integral region.
Fig. \ref{fig3} shows five cases of the integral region in (\ref{eq18}).
According to the five cases, the ACS in (\ref{eq18}) can be written as
\begin{equation}
\overline {{C_{\rm{s}}}} \!\!=\!\!\! \left\{ \begin{array}{l}\!\!\!\!
{\rm{0,}}\;\;\;\;\;\;\;\;\;\;\;\;\;\;\;\;\;\;\;\;\;\;\;\;\;\;\;\;\;\;\;\;\;\;\;\;\;\;\;\;\;\;\;\;\;\;\;\;\;\;\;\;\;{\rm{if}}\;\chi  < \frac{{{v_{\min }}}}{{{v_{\max }}}}\\
\!\!\!\!\int_{\frac{{{v_{\min }}}}{\chi }}^{{v_{{\rm{max}}}}}\! {{f_{{H_{\rm{B}}}}}\!(x)\!\int_{{v_{\min }}}^{\chi x}\! {C_{\rm{s}}^ + {f_{{H_{\rm{E}}}}}(y){\rm{d}}y} } {\kern 1pt} {\rm{d}}x,\;{\rm{if}}\;\frac{{{v_{\min }}}}{{{v_{\max }}}} \!\le\! \chi  \!<\! \frac{{{v_{{\mathop{\rm mid}\nolimits} }}}}{{{v_{\max }}}}\\
\!\!\!\!\int_{{v_{\min }}}^{{v_{{\rm{mid}}}}} \!{{f_{{H_{\rm{E}}}}}\!(y)\!\int_{\frac{y}{\chi }}^{{v_{\max }}}\! {C_{\rm{s}}^ + {f_{{H_{\rm{B}}}}}(x){\rm{d}}} } x{\kern 1pt} {\rm{d}}y,\;{\rm{if}}\;\frac{{{v_{{\mathop{\rm mid}\nolimits} }}}}{{{v_{\max }}}} \!\le\! \chi  \!<\! {\rm{1}}\\
\!\!\!\!\int_{{v_{\min }}}^{\frac{{{v_{{\mathop{\rm mid}\nolimits} }}}}{\chi }} \!{{f_{{H_{\rm{B}}}}}\!(x)\!\int_{{v_{\min }}}^{\chi x} \!{C_{\rm{s}}^ +\! {f_{{H_{\rm{E}}}}}\!(y){\rm{d}}y} } {\rm{d}}x\\
\!\!\!\! + \int_{\frac{{{v_{{\mathop{\rm m}\nolimits} {\rm{id}}}}}}{\chi }}^{{v_{{\rm{max}}}}} \!{{f_{{H_{\rm{B}}}}}\!(x)\!\int_{{v_{\min }}}^{{v_{{\rm{mid}}}}} \!{C_{\rm{s}}^ +\! {f_{{H_{\rm{E}}}}}\!(y){\rm{d}}y} } {\rm{d}}x,\;{\rm{if}}\;{\rm{1}} \!\le
 \!\chi \! <\! \frac{{{v_{{\mathop{\rm mid}\nolimits} }}}}{{{v_{\min }}}}\\
\!\!\!\!\int_{{v_{\min }}}^{{v_{{\rm{max}}}}} \!{{f_{{H_{\rm{B}}}}}\!(x)\!\int_{{v_{\min }}}^{{v_{{\rm{mid}}}}}\! {C_{\rm{s}}^ +\! {f_{{H_{\rm{E}}}}}\!(y)\!{\rm{d}}y} } {\rm{d}}x,\;{\rm{if}}\;\chi  \!\ge\! \frac{{{v_{{\mathop{\rm mid}\nolimits} }}}}{{{v_{\min }}}}
\end{array} \right.
\label{eq19}
\end{equation}

\begin{figure}
\centering
\includegraphics[width=8cm]{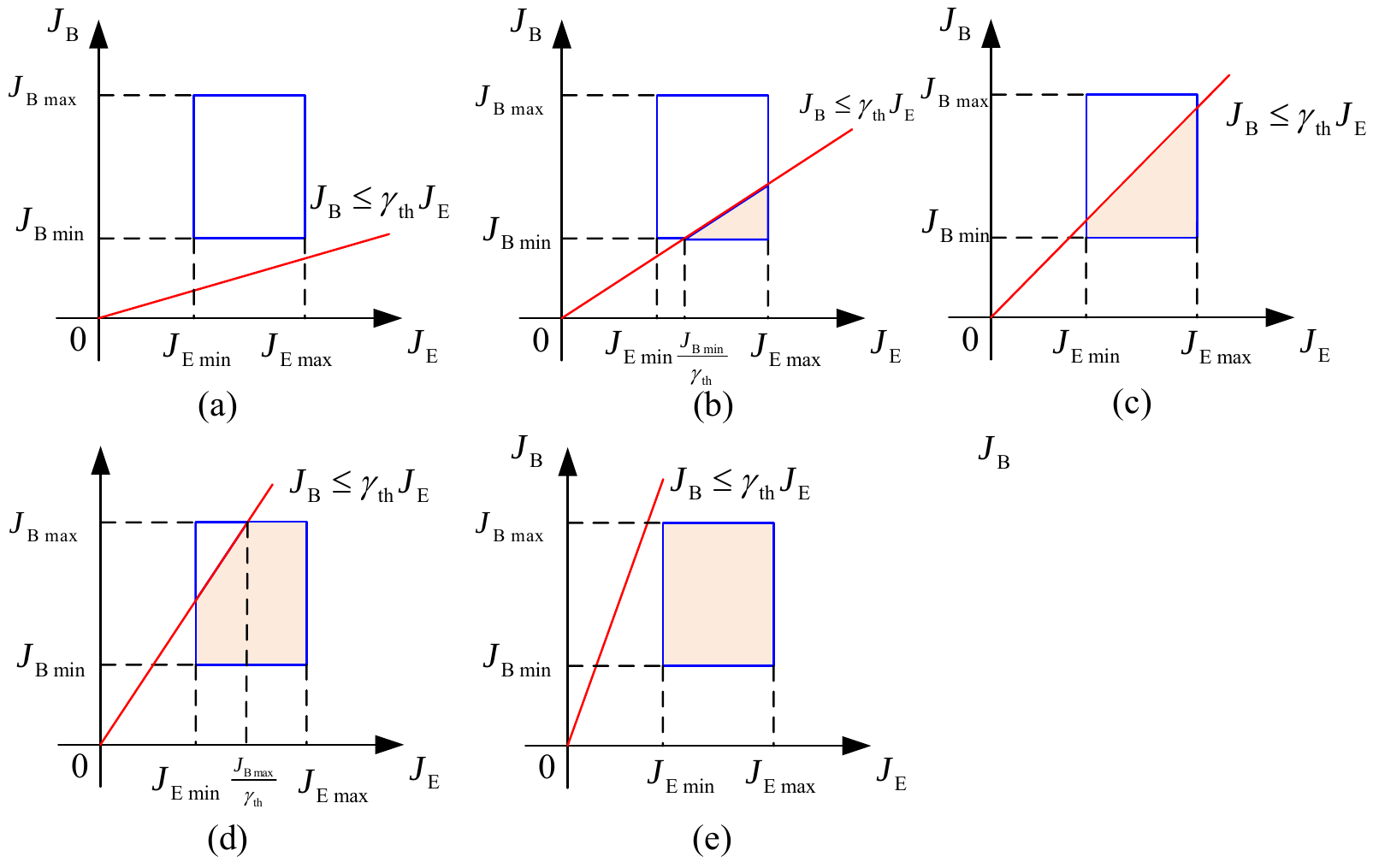}
\caption{Five cases of the integral region of (\ref{eq18}).}
\label{fig3}
\end{figure}

By solving (\ref{eq19}), the theoretical expression of the ASC is derived as the following theorem.

\begin{theorem}
By considering the imperfect CSIR and protected zone, the ASC for the random VLC network with constraints (\ref{eq6}) and (\ref{eq7}) are given by
\begin{equation}
\overline {{C_s}} {\rm{ = }}\left\{ \begin{array}{l}
{\rm{0,}}\;{\rm{if}}\;\chi  < \frac{{{v_{\min }}}}{{{v_{\max }}}}\;\\
{C_1},\;{\rm{if}}\;\frac{{{v_{\min }}}}{{{v_{\max }}}} < \chi  \le \frac{{{v_{{\mathop{\rm mid}\nolimits} }}}}{{{v_{\max }}}}\;\\
{C_2},\;{\rm{if}}\;\frac{{{v_{{\mathop{\rm mid}\nolimits} }}}}{{{v_{\max }}}} < \chi  \le 1\;\\
{C_3},\;{\rm{if}}\;1 < \chi  \le \frac{{{v_{{\mathop{\rm mid}\nolimits} }}}}{{{v_{\min }}}}\;\\
{C_4},\;{\rm{if}}\;\chi  > \frac{{{v_{{\mathop{\rm mid}\nolimits} }}}}{{{v_{\min }}}}
\end{array} \right.,
\label{eq20}
\end{equation}
where $C_1$ is given by
\begin{eqnarray}
{C_1} \!\!\!\!&=&\!\!\!\! \frac{{(m \!+\! 3){\Xi _{\rm{1}}}{\Xi _{\rm{2}}}}}{4}\left[\! {v_{\min }^{ - \frac{2}{{m + 3}}}\lambda\! \left( {\frac{{m \!+\! 3}}{2},\frac{{{v_{\min }}}}{\chi },{v_{\max }},\frac{{e{\rm{1}}{{\rm{0}}^{\frac{{{\eta _{\rm{B}}}}}{5}}}}}{{2\pi \sigma _{\rm{B}}^2}}} \right)} \right. \nonumber\\
&-&\!\!\!\! {\chi ^{ - \frac{2}{{m + 3}}}}\lambda\! \left( {\frac{{m \!+\! 3}}{{\rm{4}}},\frac{{{v_{\min }}}}{\chi },{v_{\max }},\frac{{e{\rm{1}}{{\rm{0}}^{\frac{{{\eta _{\rm{B}}}}}{5}}}}}{{2\pi \sigma _{\rm{B}}^2}}} \right) \nonumber\\
 &-&\!\!\!\! {\chi ^{\frac{2}{{m + 3}}}}\lambda \left( {\frac{{m + 3}}{{\rm{4}}},{v_{\min }},\chi {v_{\max }},\frac{{{\rm{1}}{{\rm{0}}^{\frac{{{\eta _{\rm{E}}}}}{5}}}}}{{\sigma _{\rm{E}}^2}}} \right)\nonumber\\
 &+&\left. {  v_{{\rm{max}}}^{ - \frac{2}{{m + 3}}}\lambda \left( {\frac{{m + 3}}{{\rm{2}}},{v_{\min }},\chi {v_{\max }},\frac{{{\rm{1}}{{\rm{0}}^{\frac{{{\eta _{\rm{E}}}}}{5}}}}}{{\sigma _{\rm{E}}^2}}} \right)} \right],
 \label{eq21}
\end{eqnarray}
Moreover, $C_2$ and $C_3$ are given by (\ref{eq22}) and (\ref{eq23}) as shown at the top of the next page.
\begin{table*}\normalsize
\begin{eqnarray}
{C_{\rm{2}}} \!\!\!\!\!\!&=&\!\!\!\!\!\! \frac{{(m \!+\! 3){\Xi _{\rm{1}}}{\Xi _{\rm{2}}}}}{4}\!\!\left[\! {v_{\min }^{ - \frac{2}{{m + 3}}}\lambda\! \left(\!\! {\frac{{m \!+\! 3}}{2},\frac{{{v_{\min }}}}{\chi },\frac{{{v_{{\rm{mid}}}}}}{\chi },\frac{{e{\rm{1}}{{\rm{0}}^{\frac{{{\eta _{\rm{B}}}}}{5}}}}}{{2\pi \sigma _{\rm{B}}^2}}} \right) \!-\! {\chi ^{ - \frac{2}{{m + 3}}}}\lambda\! \left( {\frac{{m \!+\! 3}}{4},\frac{{{v_{\min }}}}{\chi },\frac{{{v_{{\rm{mid}}}}}}{\chi },\frac{{e{\rm{1}}{{\rm{0}}^{\frac{{{\eta _{\rm{B}}}}}{5}}}}}{{2\pi \sigma _{\rm{B}}^2}}} \right)} +\left( {v_{\min }^{ - \frac{2}{{m + 3}}} - v_{{\mathop{\rm mi}\nolimits} {\rm{d}}}^{ - \frac{2}{{m + 3}}}} \right)\right.\nonumber \\
 &\times&\!\!\!\!\!\! \lambda \left(\! {\frac{{m \!+\! 3}}{2},\frac{{{v_{{\rm{mid}}}}}}{\chi },{v_{\max }},\frac{{e{\rm{1}}{{\rm{0}}^{\frac{{{\eta _{\rm{B}}}}}{5}}}}}{{2\pi \sigma _{\rm{B}}^2}}} \!\right)\!-\! \left. {  {\chi ^{\frac{2}{{m + 3}}}}\lambda \left(\! {\frac{{m \!+\! 3}}{4},{v_{\min }},{v_{{\mathop{\rm mid}\nolimits} }},\frac{{{\rm{1}}{{\rm{0}}^{\frac{{{\eta _{\rm{E}}}}}{5}}}}}{{\sigma _{\rm{E}}^2}}} \!\right) \!+\! v_{{\rm{max}}}^{ - \frac{2}{{m + 3}}}\lambda \left(\! {\frac{{m \!+\! 3}}{2},{v_{\min }},{v_{{\mathop{\rm mid}\nolimits} }},\frac{{{\rm{1}}{{\rm{0}}^{\frac{{{\eta _{\rm{E}}}}}{5}}}}}{{\sigma _{\rm{E}}^2}}} \!\right)} \!\right],
 \label{eq22}
\end{eqnarray}
\hrulefill
\end{table*}
\begin{table*}\normalsize
\begin{eqnarray}
{C_{\rm{3}}}  \!\!\!\!&=&\!\!\!\! \frac{{m \!+\! 3}}{4}{\Xi _{\rm{1}}}{\Xi _{\rm{2}}}\!\!\left\{\! {v_{\min }^{ - \frac{2}{{m + 3}}}\lambda\! \left(\!\! {\frac{{m \!+\! 3}}{2},{v_{\min }},\frac{{{v_{{\mathop{\rm mid}\nolimits} }}}}{\chi },\frac{{e{\rm{1}}{{\rm{0}}^{\frac{{{\eta _{\rm{B}}}}}{5}}}}}{{2\pi \sigma _{\rm{B}}^2}}} \right) \!-\! {\chi ^{ - \frac{2}{{m + 3}}}}\lambda\! \left(\!\! {\frac{{m \!+\! 3}}{4},{v_{\min }},\frac{{{v_{{\mathop{\rm mid}\nolimits} }}}}{\chi },\frac{{e{\rm{1}}{{\rm{0}}^{\frac{{{\eta _{\rm{B}}}}}{5}}}}}{{2\pi \sigma _{\rm{B}}^2}}} \right)}-\left[\! {v_{\min }^{ - \frac{2}{{m + 3}}} \!-\! {{\left(\! {\frac{{{v_{{\rm{mid}}}}}}{\chi }} \!\right)\!}^{ - \frac{2}{{m + 3}}}}} \right] \right. \nonumber \\
 &\times&\!\!\!\!\! \lambda\! \left(\!\! {\frac{{m \!+\! 3}}{2},{v_{\min }},\chi {v_{\min }},\frac{{{\rm{1}}{{\rm{0}}^{\frac{{{\eta _{\rm{E}}}}}{5}}}}}{{\sigma _{\rm{E}}^2}}}\! \right) \!-\! {\chi ^{\frac{2}{{m + 3}}}}\lambda\! \left(\!\! {\frac{{m \!+\! 3}}{4},\chi {v_{\min }},{v_{{\rm{mid}}}},\frac{{{\rm{1}}{{\rm{0}}^{\frac{{{\eta _{\rm{E}}}}}{5}}}}}{{\sigma _{\rm{E}}^2}}} \right)+{\left(\! {\frac{{{v_{{\rm{mid}}}}}}{\chi }} \!\right)^{\! - \frac{2}{{m + 3}}}}\lambda\! \left(\!\! {\frac{{m \!+\! 3}}{2},\chi {v_{\min }},{v_{{\rm{mid}}}},\frac{{{\rm{1}}{{\rm{0}}^{\frac{{{\eta _{\rm{E}}}}}{5}}}}}{{\sigma _{\rm{E}}^2}}} \!\right) \nonumber \\
 &+&\!\!\!\!\! \left(\! {v_{\min }^{ - \frac{2}{{m + 3}}} \!-\! v_{{\rm{mid}}}^{ - \frac{2}{{m + 3}}}}\! \right)\lambda \!\left(\!\! {\frac{{m \!+\! 3}}{2},\frac{{{v_{{\rm{mid}}}}}}{\chi },{v_{\max }},\frac{{e{\rm{1}}{{\rm{0}}^{\frac{{{\eta _{\rm{B}}}}}{5}}}}}{{2\pi \sigma _{\rm{B}}^2}}} \!\right)- \left. {  \left[ {{{\left( {\frac{{{v_{{\rm{mid}}}}}}{\chi }} \right)}^{ - \frac{2}{{m + 3}}}} - v_{\max }^{ - \frac{2}{{m + 3}}}} \right]\lambda \left( {\frac{{m + 3}}{2},{v_{\min }},{v_{{\rm{mid}}}},\frac{{{\rm{1}}{{\rm{0}}^{\frac{{{\eta _{\rm{E}}}}}{5}}}}}{{\sigma _{\rm{E}}^2}}} \right)} \right\},
 \label{eq23}
\end{eqnarray}
\hrulefill
\end{table*}
$C_4$ is given by
\begin{eqnarray}
{C_4}\!\!\!\!\!\!&=&\!\!\!\!\!\!\frac{{(m \!+\! 3){\Xi _{\rm{1}}}\!{\Xi _{\rm{2}}}}}{4}\!\!\left[\!\! {\left(\!\! {v_{\min }^{ - \frac{2}{{m \!+\! 3}}} \!\!-\!\! v_{{\rm{mid}}}^{ - \frac{2}{{m \!+\! 3}}}}\!\! \right)\!\lambda\!\! \left(\!\! {\frac{{m \!\!+\!\! 3}}{2},\!{v_{\min }},\!{v_{\max }},\!\frac{{e{\rm{1}}{{\rm{0}}^{\frac{{{\eta _{\rm{B}}}}}{5}}}}}{{2\pi \!\sigma _{\rm{B}}^2}}} \!\!\right)} \right.\nonumber \\
&-&\!\!\!\!\!\!\left. {  \left(\!\! {v_{\min }^{ - \frac{2}{{m \!+\! 3}}} \!-\! v_{\max }^{ - \frac{2}{{m \!+\! 3}}}}\!\! \right)\lambda\!\! \left(\!\! {\frac{{m \!\!+\!\! 3}}{2},\!{v_{\min }},\!{v_{{\rm{mid}}}},\!\frac{{{\rm{1}}{{\rm{0}}^{\frac{{{\eta _{\rm{E}}}}}{5}}}}}{{\sigma _{\rm{E}}^2}}}\!\! \right)}\!\! \right],
 \label{eq24}
\end{eqnarray}
where $\lambda (a,b,c,d)$ in (\ref{eq21})-(\ref{eq24}) is defined as
\begin{eqnarray}
\lambda (a,b,c,d) \!\!\!\!&=&\!\!\!\! a\left( {{b^{ - \frac{1}{a}}} - {c^{ - \frac{1}{a}}}} \right)\ln \left( {2\pi \sigma _{\rm{B}}^2\sigma _{\rm{E}}^2} \right) \nonumber\\
&+&\!\!\!\! \frac{1}{2}{c^{ - \frac{1}{a}}}G_{3,3}^{1,3}\left[ {d{\xi ^2}{P^2}{c^2}\left| \begin{array}{l}
1,1,1{\rm{ + }}\frac{1}{{2a}}\\
1,\frac{1}{{2a}},0
\end{array} \right.} \right] \nonumber\\
&-&\!\!\!\! \frac{1}{2}{b^{ - \frac{1}{a}}}G_{3,3}^{1,3}\left[ {d{\xi ^2}{P^2}{b^2}\left| \begin{array}{l}
1,1,1{\rm{ + }}\frac{1}{{2a}}\\
1,\frac{1}{{2a}},0
\end{array} \right.} \right],
 \label{eq25}
\end{eqnarray}
and $G_{a,b}^{c,d}[\cdot]$ denotes the Merjer's G function \cite{BIB23_1}.
\label{them2}
\end{theorem}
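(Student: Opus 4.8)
The plan is to evaluate the double integral (\ref{eq18}) directly, following the case split (\ref{eq19}) that is already dictated by the geometry of the integration region in Fig.~\ref{fig3}. The key simplification comes from splitting the logarithm in $C_s^+$ (eq.~(\ref{eq17})) into $\tfrac12\ln\frac{\sigma_E^2}{2\pi\sigma_B^2}$, plus the numerator log $\tfrac12\ln(e\xi^2P^2 10^{\eta_B/5}H_B^2 + 2\pi\sigma_B^2)$, minus the denominator log $\tfrac12\ln(10^{\eta_E/5}\xi^2P^2 H_E^2 + \sigma_E^2)$. In this form $C_s^+$ separates additively into a constant, a term depending only on the dummy variable $x$ (through $H_B$), and a term depending only on $y$ (through $H_E$). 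Since $f_{H_B}$ and $f_{H_E}$ are the pure power laws of \emph{Theorem~\ref{them1}}, each of the five double integrals in (\ref{eq19}) reduces to sums of products of single-variable integrals of just two kinds: the elementary power-law integral $\int x^{-2/(m+3)-1}\,\mathrm{d}x$ and the log-weighted integral $\int x^{-2/(m+3)-1}\ln(d\xi^2P^2 x^2+\text{const})\,\mathrm{d}x$.

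I would next establish the closed form of the log-weighted integral, which is precisely what the auxiliary function $\lambda(a,b,c,d)$ in (\ref{eq25}) packages. Setting $a=(m+3)/2$ so that $1/a=2/(m+3)$, the elementary power-law factor integrates to $a(b^{-1/a}-c^{-1/a})$; after factoring the noise constant out of each logarithm this is the coefficient of the additive log-constant $\ln(2\pi\sigma_B^2\sigma_E^2)$, giving the first line of (\ref{eq25}). For the remaining part I would represent the logarithm through its Meijer $G$-function form and integrate termwise — equivalently, verify by differentiation that $\tfrac12 h^{-1/a}G_{3,3}^{1,3}[\,d\xi^2P^2h^2\,|\,\cdots]$ is the sought antiderivative — and evaluate between $b$ and $c$ to obtain the two $G$-function terms in (\ref{eq25}). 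This reduces every inner and outer integral in (\ref{eq19}) to a finite combination of $\lambda(\cdot)$ values.

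The final step is bookkeeping. For each case I substitute the limits read off from Fig.~\ref{fig3} and collect terms against the prefactors $\Xi_1$, $\Xi_2$ and $(m+3)/4$. The inner integral over one variable is a power-law cumulative term of the form $v_{\min}^{-2/(m+3)}-(\chi x)^{-2/(m+3)}$, whose two summands, multiplied by the log-weighted outer density, produce the two distinct first arguments $(m+3)/2$ and $(m+3)/4$ seen in the arguments of $\lambda$. In the intermediate cases the inner limit $\chi x$ crosses $v_{\mathrm{mid}}$, so the integration must be partitioned at that crossing; this generates the extra summands in $C_2$ and, most intricately, the split domain in $C_3$, where both the constraint boundary $\chi H_B = H_E$ and the support boundary $H_E=v_{\mathrm{mid}}$ are simultaneously active. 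Assembling the pieces then yields (\ref{eq21})--(\ref{eq24}).

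The hard part is twofold: obtaining the closed-form log-weighted power-law integral via the Meijer $G$-function (the substitution $x\mapsto x^2$ and the identification of the correct $G$-function parameters), and the careful partitioning of the five overlapping regions in Fig.~\ref{fig3} — especially case $C_3$, where the $(x,y)$-domain must be cut along two boundaries before either integration can be performed. Everything else is routine substitution and collection of constants.
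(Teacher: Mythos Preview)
Your proposal is correct and follows essentially the same route as the paper's proof in Appendix~\ref{appb}: split $C_s^+$ additively via the logarithm into an $x$-only piece and a $y$-only piece (the paper writes this as $I_1-I_2$, $D_1+D_2-D_3$, $A_1+A_2$ in the respective cases), apply Fubini so that the inner integral is always the elementary power-law one, and then encapsulate the remaining log-weighted power-law integral in the Meijer~$G$ form $\lambda(a,b,c,d)$. Your identification of the two first arguments $(m+3)/2$ and $(m+3)/4$ as arising from the two summands of the inner cumulative integral, and of the extra domain cuts at $v_{\mathrm{mid}}/\chi$ in cases $C_2$ and $C_3$, matches the paper exactly.
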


\begin{proof}
See Appendix \ref{appb}.
\end{proof}

\begin{remark}
From \emph{Theorem \ref{them2}}, with the increase of the uncertain parameter ${\eta _{\rm{B}}}$, $\chi$ increases, and the integral range of the ASC also enlarges. Therefore, the ASC will increase with the increase of ${\eta _{\rm{B}}}$. This indicates that the larger the uncertain parameter of the main channel is, the better the secure performance becomes. Similarly, we can also conclude that the ASC performance degrades with the increase of ${\eta _{\rm{E}}}$.
\label{rem2}
\end{remark}

\begin{remark}
With the increase of $P$, the integral range of the ASC does not change, but the instantaneous SC increases first and then tends to a stable value, and thus the ASC also increases first and then tends to a stable value.
\label{rem3}
\end{remark}

\begin{remark}
Intuitively, when the radius $\rho $ is larger, the feasible zone of Eve (i.e., ${\cal S}\backslash {\cal P}$) becomes smaller. In this case, the eavesdropping channel gets worse than the main channel. Therefore, with the increase of the radius $\rho $, the ASC performance improves. Oppositely, when $\rho  = 0$, Eve can be deployed in the same zone as Bob, the worst ASC achieves.
In a word, the secrecy performance improves by using the protected zone, however, the complexity of the system also increases. Therefore, a tradeoff between secrecy performance and system cost should be considered for practical systems.
\label{rem4}
\end{remark}

\section{Secrecy Outage Probability Analysis}
\label{section5}
In this section, the SOP for the random VLC network with imperfect CSI will be derived. It is very hard to obtain a closed-form expression for the exact SOP. Alternatively, a closed-form expression for a lower bound of the SOP is derived.

To facilitate the analysis, the instantaneous SC in (\ref{eq16}) can be further written as
\begin{eqnarray}
{C_{\rm{s}}} = \left\{ \begin{array}{l}
\frac{1}{2}\ln \left( {\frac{{1 + {J_{\rm{B}}}}}{{1 + {J_{\rm{E}}}}}} \right),\;{\rm{if}}{\kern 1pt} \chi {H_{\rm{B}}} \ge {H_{\rm{E}}}\\
{\rm{0,}}\;{\rm{otherwise}}
\end{array} \right.,
 \label{eq26}
\end{eqnarray}
where ${J_{\rm{B}}}$ and ${J_{\rm{E}}}$ are defined as
\begin{eqnarray}
\left\{ \begin{array}{l}
{J_{\rm{B}}} = \frac{{{\rm{1}}{{\rm{0}}^{\frac{{{\eta _{\rm{B}}}}}{5}}}e{\xi ^2}{P^2}H_{\rm{B}}^2}}{{2\pi \sigma _{\rm{B}}^2}}\\
{J_{\rm{E}}} = \frac{{{\rm{1}}{{\rm{0}}^{\frac{{{\eta _{\rm{E}}}}}{5}}}{\xi ^2}{P^2}H_{\rm{E}}^2}}{{\sigma _{\rm{E}}^2}}
\end{array} \right..
 \label{eq27}
\end{eqnarray}

According to (\ref{eq8}), the PDF of ${J_{\rm{B}}}$ is given by \cite{BIB24}
\begin{eqnarray}
{f_{{J_{\rm{B}}}}}(j) 
 \!\!\!\!\!&=&\!\!\!\!\! \frac{{{\Xi _1}}}{2}{\left( {\frac{{2\pi \sigma _{\rm{B}}^2}}{{e{\xi ^2}{P^2}{\rm{1}}{{\rm{0}}^{\frac{{{\eta _{\rm{B}}}}}{5}}}}}} \right)^{ - \frac{1}{{m + 3}}}}{j^{ - \frac{1}{{m + 3}} - 1}},\nonumber\\
 &&\;\frac{{e{\xi ^2}{P^2}{\rm{1}}{{\rm{0}}^{\frac{{{\eta _{\rm{B}}}}}{5}}}v_{\min }^2}}{{2\pi \sigma _{\rm{B}}^2}} \!\le\! j \!\le\! \frac{{e{\xi ^2}{P^2}{\rm{1}}{{\rm{0}}^{\frac{{{\eta _{\rm{B}}}}}{5}}}v_{\max }^2}}{{2\pi \sigma _{\rm{B}}^2}}.
 \label{eq28}
\end{eqnarray}

Similarly, according to (\ref{eq9}), the PDF of ${J_{\rm{E}}}$ is given by \cite{BIB24}
\begin{eqnarray}
{f_{{J_{\rm{E}}}}}(j) 
 \!\!\!\!&=&\!\!\!\! \frac{{{\Xi _{\rm{2}}}}}{2}{\left( {\frac{{\sigma _{\rm{E}}^2}}{{{\rm{1}}{{\rm{0}}^{\frac{{{\eta _{\rm{E}}}}}{5}}}{\xi ^2}{P^2}}}} \right)^{ - \frac{1}{{m + 3}}}}{j^{ - \frac{1}{{m + 3}} - 1}},\nonumber\\
 &&\frac{{{\rm{1}}{{\rm{0}}^{\frac{{{\eta _{\rm{E}}}}}{5}}}{\xi ^2}{P^2}v_{\min }^2}}{{\sigma _{\rm{E}}^2}} \le j \le \frac{{{\rm{1}}{{\rm{0}}^{\frac{{{\eta _{\rm{E}}}}}{5}}}{\xi ^2}{P^2}v_{{\rm{mid}}}^2}}{{\sigma _{\rm{E}}^2}}.
 \label{eq29}
\end{eqnarray}

Moreover, eq. (\ref{eq26}) can be further expressed as
\begin{eqnarray}
{C_{\rm{s}}} = \left\{ \begin{array}{l}
\frac{1}{2}\ln \left( {\frac{{1 + {J_{\rm{B}}}}}{{1 + {J_{\rm{E}}}}}} \right),\;{\rm{if}}{\kern 1pt} {J_{\rm{B}}} \ge {J_{\rm{E}}}\\
{\rm{0,}}\;{\rm{otherwise}}
\end{array} \right..
 \label{eq30}
\end{eqnarray}

According to (\ref{eq30}), the SOP is defined as
\begin{eqnarray}
{P_{{\rm{SOP}}}} 
 &=& \Pr \left( {\frac{1}{2}\ln \left( {\frac{{1 + {J_{\rm{B}}}}}{{1 + {J_{\rm{E}}}}}} \right) \le \frac{1}{2}\ln {\gamma _{{\rm{th}}}}} \right)\nonumber\\
 &=& \Pr \left( {{J_{\rm{B}}} \le (1 + {J_{\rm{E}}}){\gamma _{{\rm{th}}}} - 1} \right)\nonumber\\
 &\ge& \Pr \left( {{J_{\rm{B}}} \le {J_{\rm{E}}}{\gamma _{{\rm{th}}}}} \right) \buildrel \Delta \over = P_{{\rm{SOP}}}^{\rm{L}}
 \label{eq31}
\end{eqnarray}
where ${\gamma _{{\rm{th}}}} \ge {\rm{1}}$ denotes the equivalent threshold of the signal-to-noise ratio.

\begin{figure}
\centering
\includegraphics[width=8cm]{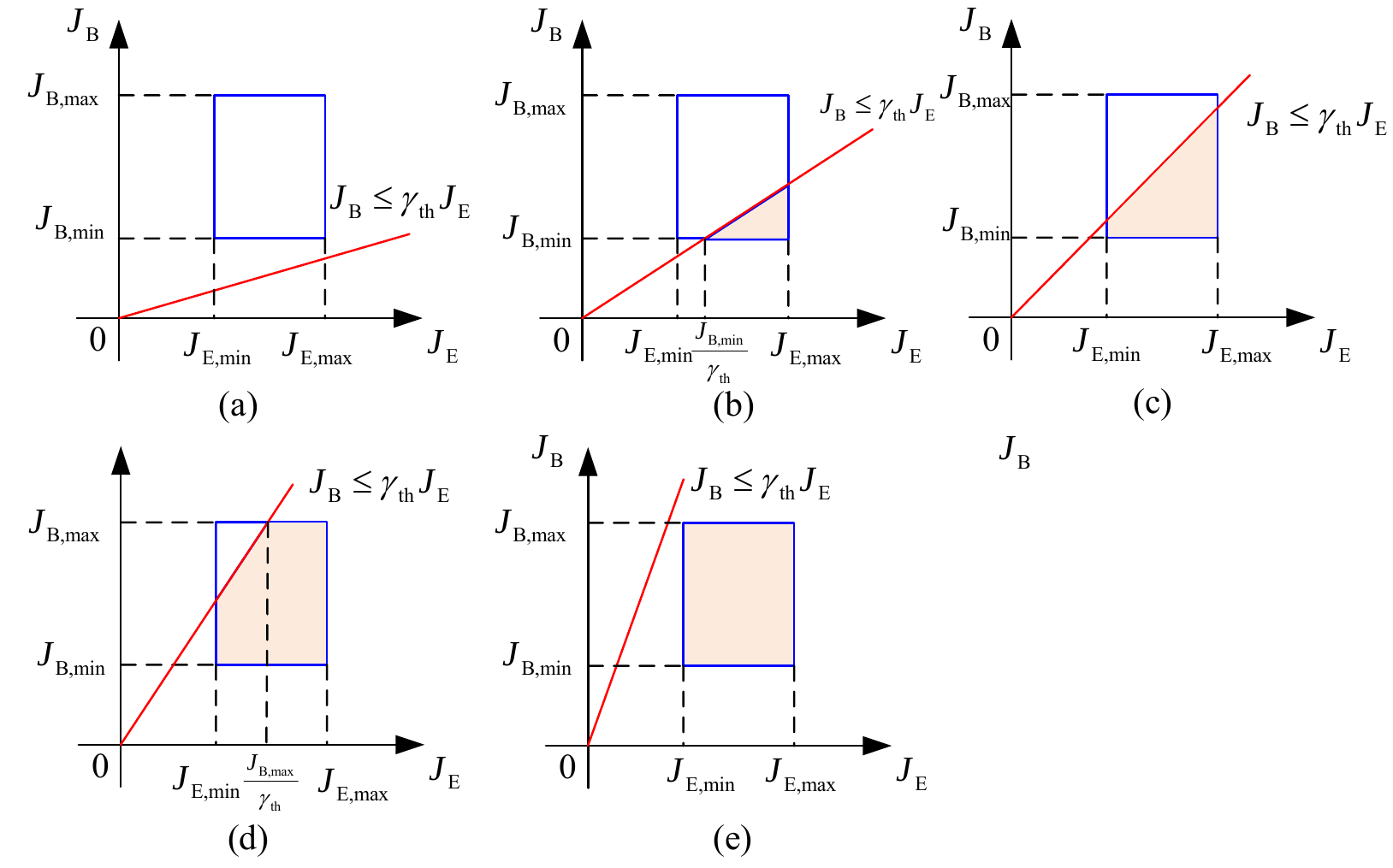}
\caption{Five cases of the integral region of (\ref{eq31}).}
\label{fig4}
\end{figure}

Note that the value of (\ref{eq31}) also depends on the integral region. Fig. \ref{fig4} shows five cases of the integral region in (\ref{eq31}). According to the five cases, the lower bound of SOP in (\ref{eq31}) can be written as
\begin{eqnarray}
P_{{\rm{SOP}}}^{\rm{L}}\!\!\!=\!\!\! \left\{ \begin{array}{l}\!\!\!\!\!
{\rm{0,}}\;{\rm{if}}\;{\gamma _{{\rm{th}}}} \le {\chi ^2}\frac{{v_{\min }^2}}{{v_{{\rm{mid}}}^2}}\;\\
\!\!\!\!\!\int_{\!\frac{{{J_{{\rm{B}},\min }}}}{{{\gamma _{{\rm{th}}}}}}}^{\!{J_{{\rm{E}},\max }}}\! \!{{f_{{J_{\rm{E}}}}}\!(\!z\!)\!\!\int_{\!{J_{{\rm{B}},\min }}}^{\!{\gamma _{{\rm{th}}}}z} \!\!{{f_{{J_{\rm{B}}}}}\!(\!y\!)\!{\rm{d}}y{\rm{d}}z} } ,{\rm{if}}{\chi ^2}\frac{{v_{\min }^2}}{{v_{{\rm{mid}}}^2}} \!\!<\!\! {\gamma _{{\rm{th}}}} \!\!\le\!\! {\chi ^2}\;\\
\!\!\!\!\!\int_{\!{J_{{\rm{E,min}}}}}^{\!{J_{{\rm{E,max}}}}} \!\!{{f_{{J_{\rm{E}}}}}\!(\!z\!)\!\!\int_{\!{J_{{\rm{B,min}}}}}^{\!{\gamma _{{\rm{th}}}}z} \!\!{{f_{{J_{\rm{B}}}}}\!(\!y\!)\!{\rm{d}}y{\rm{d}}z} },{\rm{if}}{\chi ^2} \!\!<\!\! {\gamma _{{\rm{th}}}} \!\!\le\!\! \frac{{\chi ^2}{v_{\max }^2}}{{v_{{\rm{mid}}}^2}}\\
\!\!\!\!\!\int_{\!{J_{{\rm{E,min}}}}}^{\!\frac{{{J_{{\rm{B,max}}}}}}{{{\gamma _{{\rm{th}}}}}}} \!\!{{f_{{J_{\rm{E}}}}}\!(\!z\!)\!\!\int_{\!{J_{{\rm{B,min}}}}}^{\!{\gamma _{{\rm{th}}}}z} \!\!{{f_{{J_{\rm{B}}}}}\!(\!y\!)\!{\rm{d}}y{\rm{d}}z} } +\\
\!\!\!\!\!  \int_{\!\frac{{{J_{{\rm{B,max}}}}}}{\!{{\gamma _{{\rm{th}}}}}}}^{{J_{{\rm{E,max}}}}} \!\!{{f_{{J_{\rm{E}}}}}\!(\!z\!)\!\!\int_{\!{J_{{\rm{B,min}}}}}^{\!{J_{{\rm{B,max}}}}}\!\! {{f_{{J_{\rm{B}}}}}\!(\!y\!)\!{\rm{d}}y{\rm{d}}z} } ,{\rm{if}}\frac{{\chi ^2}\!{v_{\max }^2}}{{v_{{\rm{mid}}}^2}} \!\!<\!\! {\gamma _{{\rm{th}}}} \!\!\le\!\! \frac{{\chi ^2}\!{v_{\max }^2}}{{v_{\min }^2}}\;\\
\!\!\!\!\!1,\;{\rm{if}}\;{\gamma _{{\rm{th}}}} \!>\! {\chi ^2}\frac{{v_{\max }^2}}{{v_{\min }^2}}
\end{array} \right.\!\!\!\!\!\!\!\!\!\!\!,
 \label{eq32}
\end{eqnarray}
where ${J_{{\rm{B,min}}}} = \frac{{e{\xi ^2}{P^2}{\rm{1}}{{\rm{0}}^{\frac{{{\eta _{\rm{B}}}}}{5}}}v_{\min }^2}}{{2\pi \sigma _{\rm{B}}^2}}$, ${J_{{\rm{B,max}}}} = \frac{{e{\xi ^2}{P^2}{\rm{1}}{{\rm{0}}^{\frac{{{\eta _{\rm{B}}}}}{5}}}v_{\max }^2}}{{2\pi \sigma _{\rm{B}}^2}}$, ${J_{{\rm{E,min}}}} = \frac{{{\rm{1}}{{\rm{0}}^{\frac{{{\eta _{\rm{E}}}}}{5}}}{\xi ^2}{P^2}v_{\min }^2}}{{\sigma _{\rm{E}}^2}}$, and ${J_{{\rm{E,max}}}} = \frac{{{\rm{1}}{{\rm{0}}^{\frac{{{\eta _{\rm{E}}}}}{5}}}{\xi ^2}{P^2}v_{{\rm{mid}}}^2}}{{\sigma _{\rm{E}}^2}}$.

By solving (\ref{eq32}), the closed-form expression of the lower bound of the SOP is obtained in the following theorem.

\begin{theorem}
By considering the imperfect CSIR and protected zone, the SOP for the random VLC network with constraints (\ref{eq6}) and (\ref{eq7}) are given by
\begin{eqnarray}
{P_{{\rm{SOP}}}}{\rm{ = }}\left\{ \begin{array}{l}
{\rm{0,}}\;{\rm{if}}\;{\gamma _{{\rm{th}}}} \le {\chi ^2}\frac{{v_{\min }^2}}{{v_{{\rm{mid}}}^2}}\;\\
{S_1},\;{\rm{if}}\;{\chi ^2}\frac{{v_{\min }^2}}{{v_{{\rm{mid}}}^2}} < {\gamma _{{\rm{th}}}} \le {\chi ^2}\;\\
{S_2},\;{\rm{if}}\;{\chi ^2}\; < {\gamma _{{\rm{th}}}} \le {\chi ^2}\;\frac{{v_{\max }^2}}{{v_{{\rm{mid}}}^2}}\\
{S_3},\;{\rm{if}}\;{\chi ^2}\;\frac{{v_{\max }^2}}{{v_{{\rm{mid}}}^2}} < {\gamma _{{\rm{th}}}} \le {\chi ^2}\frac{{v_{\max }^2}}{{v_{\min }^2}}\;\\
1,\;{\rm{if}}\;{\gamma _{{\rm{th}}}} > {\chi ^2}\frac{{v_{\max }^2}}{{v_{\min }^2}}
\end{array} \right.,
 \label{eq33}
\end{eqnarray}
where $S_1$, $S_2$ and $S_3$ are given by
\begin{eqnarray}
{S_1} \!\!\!\!&=&\!\!\!\! \psi {(m \!+\! 3)^2}J_{{\rm{B,min}}}^{ - \frac{1}{{m + 3}}}\!\!\left[\!\! {{\left(\!\! {\frac{{{J_{{\rm{B,min}}}}}}{{{\gamma _{{\rm{th}}}}}}}\!\! \right)\!}^{ - \frac{1}{{m + 3}}}}J_{{\rm{E,max}}}^{ - \frac{1}{{m + 3}}} \!\right] \nonumber\\
\!&-&\!\!\!\!  \frac{\psi }{2}{(m \!+\! 3)^2}\gamma _{{\rm{th}}}^{ - \frac{1}{{m + 3}}}\!\!\left[\!\! {{{\left(\!\! {\frac{{{J_{{\rm{B,min}}}}}}{{{\gamma _{{\rm{th}}}}}}}\!\! \right)\!}^{ - \frac{2}{{m + 3}}}} \!-\! J_{{\rm{E,max}}}^{ - \frac{2}{{m + 3}}}} \right],
 \label{eq34}
\end{eqnarray}

\begin{eqnarray}
{S_2} \!\!\!\!&=&\!\!\!\! \psi {(m + 3)^2}J_{{\rm{B,min}}}^{ - \frac{1}{{m + 3}}}\left( {J_{{\rm{E,min}}}^{ - \frac{1}{{m + 3}}} - J_{{\rm{E,max}}}^{ - \frac{1}{{m + 3}}}} \right) \nonumber\\
&-&\!\!\!\! \frac{\psi }{2}{(m + 3)^2}\gamma _{{\rm{th}}}^{ - \frac{1}{{m + 3}}}\left[ {J_{{\rm{E,min}}}^{ - \frac{2}{{m + 3}}} - J_{{\rm{E,max}}}^{ - \frac{2}{{m + 3}}}} \right],
 \label{eq35}
\end{eqnarray}
and
\begin{eqnarray}
\!\!\!\!\!\!\!\!\!\!\!\!\!\!\!{S_3}\!\!\!\!\! &=&\!\!\!\!\! \psi {(m \!+\! 3)^2}J_{{\rm{B,min}}}^{ - \frac{1}{{m + 3}}}\!\!\left[\! {J_{{\rm{E,min}}}^{ - \frac{1}{{m + 3}}} \!-\! {{\left(\!\! {\frac{{{J_{{\rm{B,max}}}}}}{{{\gamma _{{\rm{th}}}}}}} \!\!\right)\!}^{ - \frac{1}{{m + 3}}}}}\! \right] \nonumber\\
\!\!\!\!\!\!\!\!\!\!\!\!\!\!\!&-&\!\!\!\!\! \frac{\psi }{2}{(m \!+\! 3)^2}\gamma _{{\rm{th}}}^{ - \frac{1}{{m + 3}}}\!\!\left[\! {J_{{\rm{E,min}}}^{ - \frac{2}{{m + 3}}} \!-\! {{\left(\!\! {\frac{{{J_{{\rm{B,max}}}}}}{{{\gamma _{{\rm{th}}}}}}} \!\!\right)}^{ - \frac{2}{{m + 3}}}}}\! \right]\nonumber \\
\!\!\!\!\!\!\!\!\!\!\!\!\!\!\! &+&\!\!\!\!\!\! \psi {(m \!+\! 3)^2}\!\!\left(\!\! {J_{{\rm{B,min}}}^{ - \frac{1}{{m \!+\! 3}}} \!\!\!-\! J_{{\rm{B,max}}}^{ - \frac{1}{{m \!+\! 3}}}} \!\! \right)\!\!\!\left[\!\! {{{\left(\!\! {\frac{{{J_{{\rm{B,max}}}}}}{{{\gamma _{{\rm{th}}}}}}} \!\!\right)}^{\!\!\! - \frac{1}{{m \!+\! 3}}}} \!\!\!-\! J_{{\rm{E,max}}}^{ - \frac{1}{{m + 3}}}} \!\!\right]\!,
 \label{eq36}
\end{eqnarray}
where $\psi  = \frac{{{\Xi _1}{\Xi _{\rm{2}}}}}{4}{\left( {\frac{{2\pi \sigma _{\rm{B}}^2\sigma _{\rm{E}}^2}}{{{\rm{1}}{{\rm{0}}^{\frac{{{\eta _{\rm{E}}} + {\eta _{\rm{B}}}}}{5}}}e{\xi ^4}{P^4}}}} \right)^{ - \frac{1}{{m + 3}}}}$.
\label{them3}
\end{theorem}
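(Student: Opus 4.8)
The plan is to evaluate the lower bound $P_{\rm SOP}^{\rm L}=\Pr(J_{\rm B}\le\gamma_{\rm th}J_{\rm E})$ from (\ref{eq31}) directly by carrying out the double integral in (\ref{eq32}) case by case, exploiting the fact that both densities (\ref{eq28}) and (\ref{eq29}) are pure power laws of the common monomial form $\mathrm{const}\cdot j^{-\frac{1}{m+3}-1}$. The first observation I would record is that every integral that arises is elementary: the antiderivative of $j^{-\frac{1}{m+3}-1}$ is $-(m+3)\,j^{-\frac{1}{m+3}}$ and that of $j^{-\frac{2}{m+3}-1}$ is $-\frac{m+3}{2}\,j^{-\frac{2}{m+3}}$. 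These are exactly the two exponents $-\frac{1}{m+3}$ and $-\frac{2}{m+3}$ that populate $S_1$, $S_2$ and $S_3$, and the prefactors $(m+3)$ and $(m+3)^2$ in (\ref{eq34})--(\ref{eq36}) are simply $1/\alpha$ and $1/\alpha^2$ for $\alpha=\frac{1}{m+3}$ accumulated from the inner and outer integrations.

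Next I would fix the geometry. Writing the bound as an iterated integral over the rectangle $[J_{\rm B,min},J_{\rm B,max}]\times[J_{\rm E,min},J_{\rm E,max}]$, the event $\{J_{\rm B}\le\gamma_{\rm th}J_{\rm E}\}$ is the region below the line $J_{\rm B}=\gamma_{\rm th}J_{\rm E}$. Using the defining expressions of the limits together with $\chi^2=10^{(\eta_{\rm B}-\eta_{\rm E})/5}e\sigma_{\rm E}^2/(2\pi\sigma_{\rm B}^2)$, one checks the key identities $J_{\rm B,min}/J_{\rm E,min}=\chi^2$ and $J_{\rm B,max}/J_{\rm E,max}=\chi^2 v_{\max}^2/v_{\rm mid}^2$, so that the four thresholds $\chi^2 v_{\min}^2/v_{\rm mid}^2$, $\chi^2$, $\chi^2 v_{\max}^2/v_{\rm mid}^2$, $\chi^2 v_{\max}^2/v_{\min}^2$ are precisely the values of $\gamma_{\rm th}$ at which the line passes through a corner of the rectangle. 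This yields the five-case split of (\ref{eq32}); the two extreme cases are immediate, since for the smallest $\gamma_{\rm th}$ the region is empty (probability $0$) and for the largest it is the whole rectangle (probability $1$, as both densities integrate to unity).

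For the interior cases I would perform the inner integral over $y$ first. Holding $z$ fixed, $\int_{J_{\rm B,min}}^{\gamma_{\rm th}z}f_{J_{\rm B}}(y)\,{\rm d}y$ is a constant times $\big[J_{\rm B,min}^{-\frac{1}{m+3}}-(\gamma_{\rm th}z)^{-\frac{1}{m+3}}\big]$. Multiplying by $f_{J_{\rm E}}(z)\propto z^{-\frac{1}{m+3}-1}$ splits the outer integral into two elementary pieces: the $J_{\rm B,min}^{-\frac{1}{m+3}}$ term produces $z^{-\frac{1}{m+3}-1}$ and hence the $-\frac{1}{m+3}$ contributions, while the $(\gamma_{\rm th}z)^{-\frac{1}{m+3}}$ term produces $z^{-\frac{2}{m+3}-1}$ and hence the $-\frac{2}{m+3}$ contributions carrying the factor $\gamma_{\rm th}^{-\frac{1}{m+3}}$. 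Collecting the two leading constants of (\ref{eq28}) and (\ref{eq29}) gives precisely $\psi=\frac{\Xi_1\Xi_2}{4}\big(\frac{2\pi\sigma_{\rm B}^2\sigma_{\rm E}^2}{10^{(\eta_{\rm E}+\eta_{\rm B})/5}e\xi^4P^4}\big)^{-\frac{1}{m+3}}$. Reading off the outer limits from (\ref{eq32}) then delivers $S_1$ (limits $J_{\rm B,min}/\gamma_{\rm th}$ to $J_{\rm E,max}$) and $S_2$ (limits $J_{\rm E,min}$ to $J_{\rm E,max}$) verbatim.

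The main obstacle will be the fourth case, which produces $S_3$. Here $\gamma_{\rm th}>J_{\rm B,max}/J_{\rm E,max}$, so as $z$ grows the line $J_{\rm B}=\gamma_{\rm th}J_{\rm E}$ exits the rectangle through its right edge $J_{\rm B}=J_{\rm B,max}$ at $z=J_{\rm B,max}/\gamma_{\rm th}<J_{\rm E,max}$, forcing the inner $y$-integral to saturate at $J_{\rm B,max}$ beyond that point. The outer integral must therefore be split at $z=J_{\rm B,max}/\gamma_{\rm th}$ into a \emph{sloped} part (upper limit $\gamma_{\rm th}z$, handled exactly as above with limits $J_{\rm E,min}$ to $J_{\rm B,max}/\gamma_{\rm th}$) and a \emph{full-width} part (upper limit $J_{\rm B,max}$). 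In the full-width part the inner integral is constant in $z$, equal to a multiple of $\big(J_{\rm B,min}^{-\frac{1}{m+3}}-J_{\rm B,max}^{-\frac{1}{m+3}}\big)$, and integrating the remaining $z^{-\frac{1}{m+3}-1}$ over $[J_{\rm B,max}/\gamma_{\rm th},J_{\rm E,max}]$ gives the extra third line of (\ref{eq36}). The only real care needed is the bookkeeping of limits across this split and verifying, via the identity $J_{\rm B,max}/J_{\rm E,max}=\chi^2 v_{\max}^2/v_{\rm mid}^2$, that the split point lies inside $[J_{\rm E,min},J_{\rm E,max}]$ exactly on the case-4 range; the remaining algebra is the same power integration already used for $S_1$ and $S_2$.
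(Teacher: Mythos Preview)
Your proposal is correct and follows essentially the same route as the paper's own proof: both start from the case-split double integral (\ref{eq32}), integrate the power-law density $f_{J_{\rm B}}$ over $y$ first to get a two-term expression, and then integrate the resulting powers of $z$ against $f_{J_{\rm E}}$, with the fourth case requiring the additional split of the outer integral at $z=J_{\rm B,max}/\gamma_{\rm th}$. Your write-up simply makes explicit the geometric reason for the five thresholds (the line $J_{\rm B}=\gamma_{\rm th}J_{\rm E}$ passing through successive corners of the support rectangle) and the elementary antiderivatives, which the paper leaves implicit.
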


\begin{proof}
See Appendix \ref{appc}.
\end{proof}

\begin{remark}
With the increase of ${\gamma _{{\rm{th}}}}$, the value of the lower bound of the SOP also enlarges. Finally, when ${\gamma _{{\rm{th}}}} > {{{\chi ^2}v_{\max }^2} \mathord{\left/ {\vphantom {{{\chi ^2}v_{\max }^2} {v_{\min }^2}}} \right. \kern-\nulldelimiterspace} {v_{\min }^2}}$, the lower bound of the SOP becomes one, and the information cannot be transmitted securely.
\label{rem5}
\end{remark}

\begin{remark}
Similar to \emph{Remark \ref{rem4}}, the SOP performance can also improve with the increase of $\rho$.
\label{rem6}
\end{remark}

\section{Numerical Results}
\label{section6}
In this section, the derived theoretical expressions of the ASC and the lower bound of the SOP will be verified by using Monte-Carlo simulations. The main simulation parameters are listed in Table \ref{tab1}.

\begin{table}[!h]
\caption{Main simulation parameters.}
\begin{center}
\begin{tabularx}{8.5cm}{|p{4cm}|X|X|}
\hline\hline
\centering Order of Lambertian emission &\centering $m$ &\centering 6
\tabularnewline\hline
\centering Physical area of PD &\centering ${A_r}$ &\centering 1 cm$^2$
\tabularnewline\hline
\centering Optical filter gain of PD &\centering ${T_s}$ &\centering 1
\tabularnewline\hline
\centering Concentrator gain of PD &\centering $g$ &\centering 3
\tabularnewline\hline
\centering Noise variances &\centering $\sigma _{\rm{B}}^2$, $\sigma _{\rm{E}}^2$ &\centering 1
\tabularnewline\hline\hline
\end{tabularx}
\end{center}
\label{tab1}
\end{table}

\subsection{ASC results}
\label{section6_1}
Fig. \ref{fig5} shows the ASC versus the nominal optical intensity $P$ with different $\rho$ when ${\eta _{\rm{B}}} = 10$, ${\eta _{\rm{E}}} = 1$, $l = 4{\rm{m}}$, and $D = 8{\rm{m}}$. As can be observed, when $P$ is small, all ASCs are zeros.
With the increase of $P$, the ASCs increase and then tend to stable values, which verifies \emph{Remark \ref{rem3}}.
Moreover, with the increase of the radius $\rho$ of the protected zone,
the ASC performance improves dramatically, which coincides with \emph{Remark \ref{rem4}}.
This is because Eve is located farther away from Bob when the protected zone is larger.

\begin{figure}
\centering
\includegraphics[width=8cm]{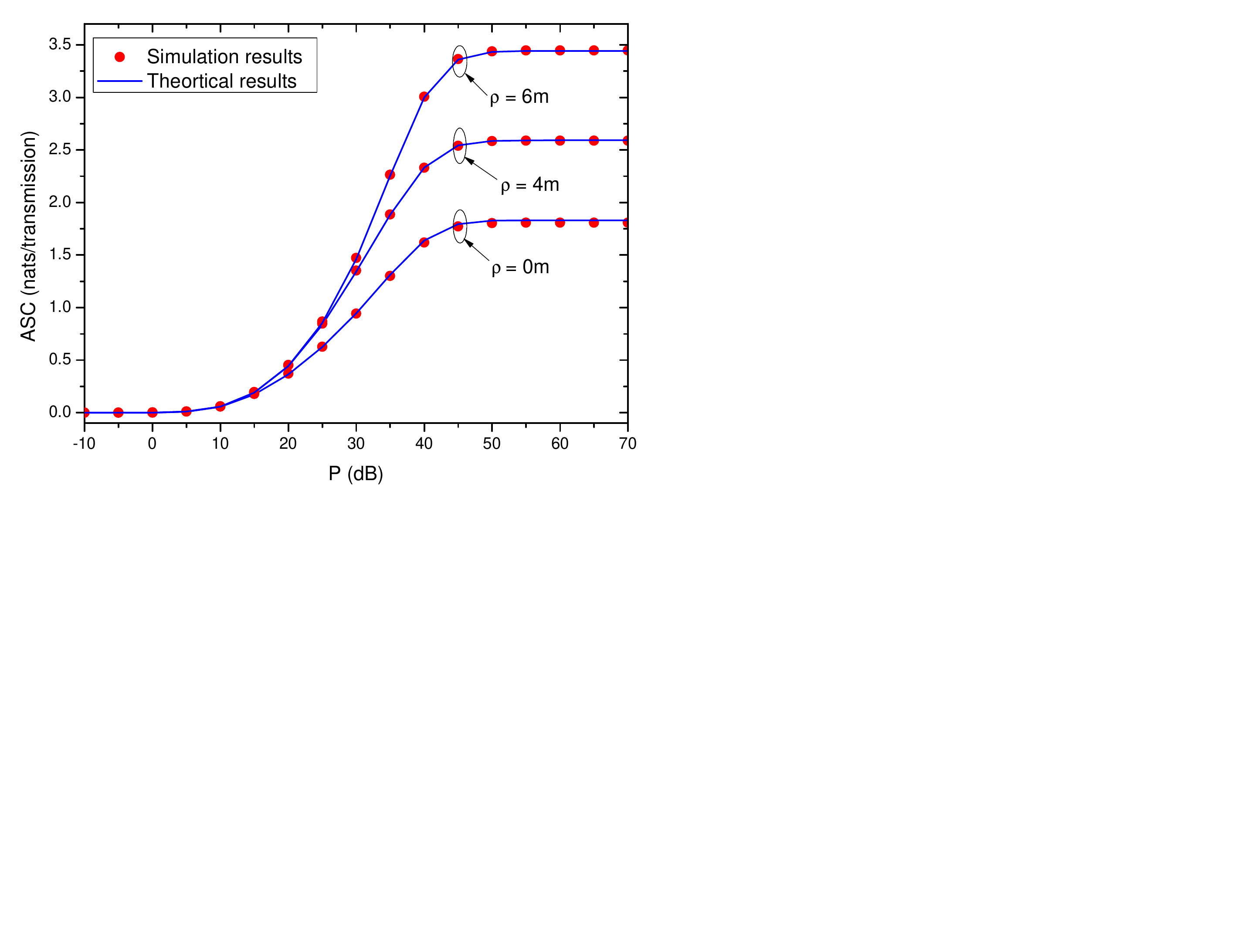}
\caption{ASC versus nominal optical intensity $P$ with different $\rho$ when ${\eta _{\rm{B}}} = 10$, ${\eta _{\rm{E}}} = 1$, $l = 4{\rm{m}}$, and $D = 8{\rm{m}}$.}
\label{fig5}
\end{figure}

Fig. \ref{fig6} plots the ASC versus the nominal optical intensity $P$ with different ${\eta _{\rm{B}}}$ when $\rho  = 4$, ${\eta _{\rm{E}}} = 1$, $l = 4{\rm{m}}$, and $D = 8{\rm{m}}$. According to (\ref{eq4}), it is known that the estimated channel gain ${\hat H_{\rm{B}}}$ increases with the increase of ${\eta _{\rm{B}}}$.
 Larger ${\hat H_{\rm{B}}}$ will lead to a good ASC performance.
 Therefore, the ASC performance improves with the increase of ${\eta _{\rm{B}}}$,
 which consists with that in \emph{Remark \ref{rem2}}.
 This indicates that the larger the uncertainty of the imperfect CSI, the larger the gap between the realistic ASC and the estimated ASC becomes.

\begin{figure}
\centering
\includegraphics[width=8cm]{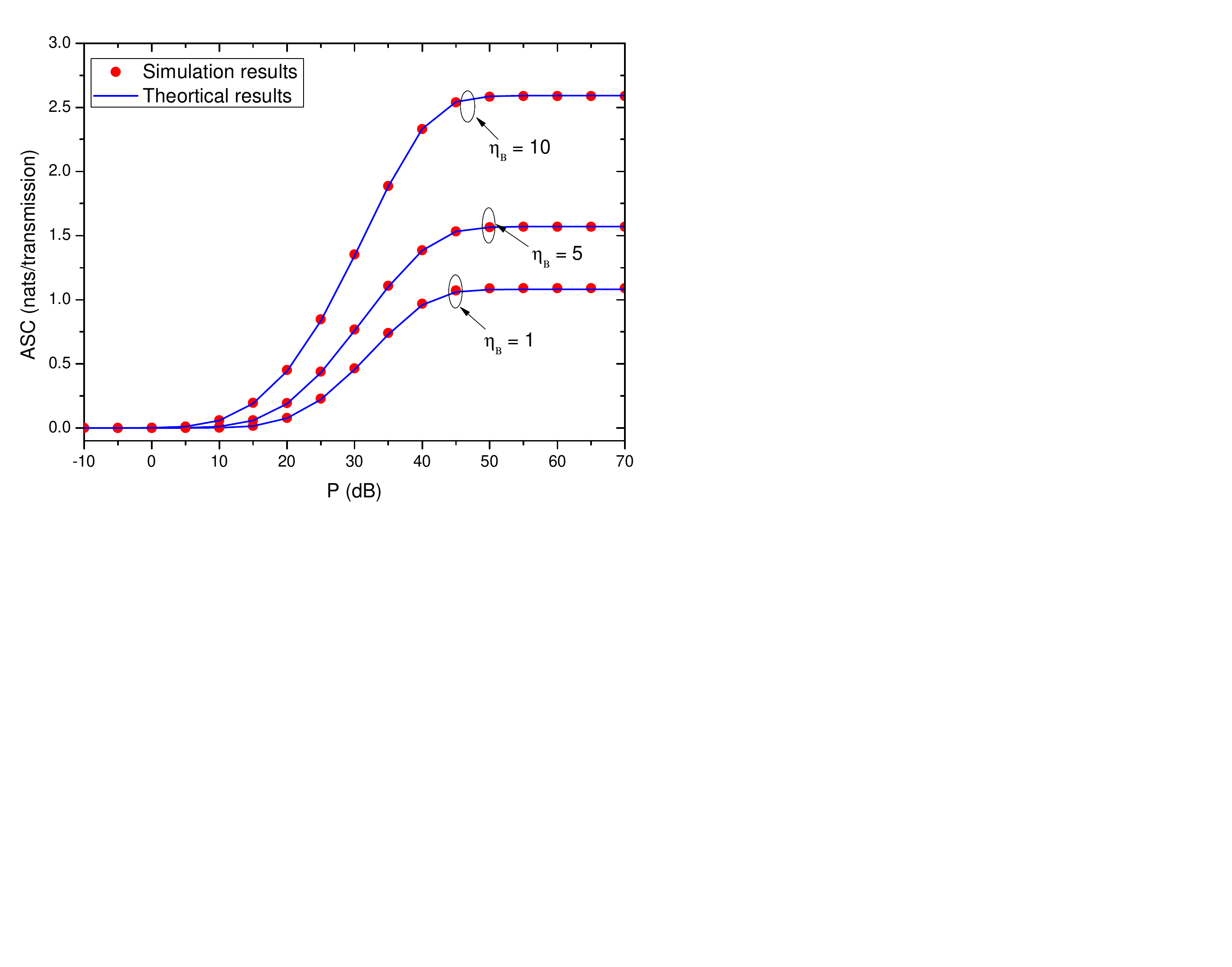}
\caption{ASC versus nominal optical intensity $P$ with different ${\eta _{\rm{B}}}$ when $\rho  = 4$, ${\eta _{\rm{E}}} = 1$, $l = 4{\rm{m}}$, and $D = 8{\rm{m}}$.}
\label{fig6}
\end{figure}

\begin{figure}
\centering
\includegraphics[width=8cm]{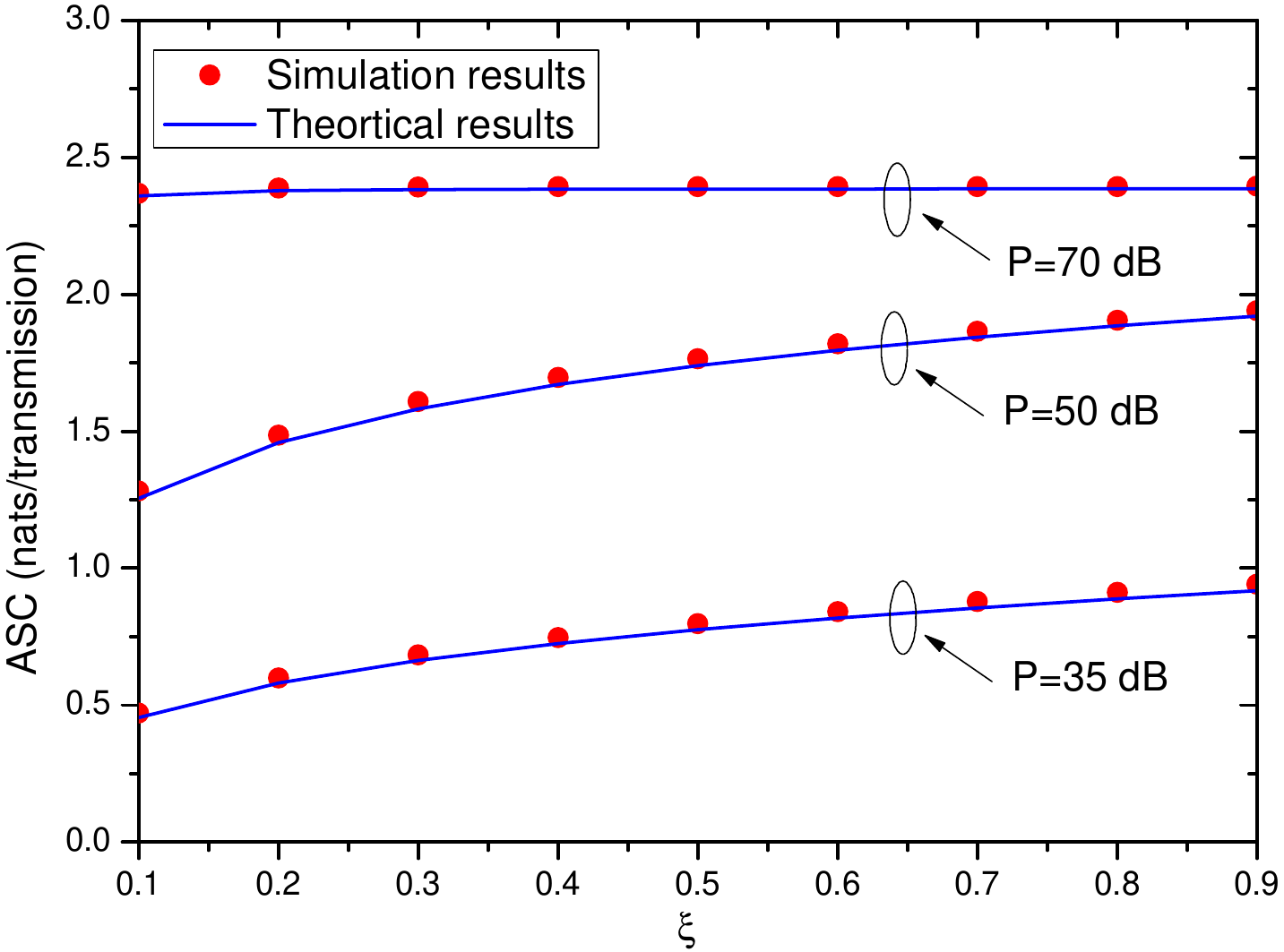}
\caption{ASC versus dimming target $\xi $ for different nominal optical intensity $P$ when $\rho  = 2{\kern 1pt} {\kern 1pt} {\rm{m}}$, ${\eta _{\rm{B}}} = 10$, ${\eta _{\rm{E}}} = 1$, $l = 4{\rm{m}}$, and $D = 15{\rm{m}}$.}
\label{fig7}
\end{figure}

Fig. \ref{fig7} shows ASC versus dimming target $\xi$ for different nominal optical intensity $P$ when $\rho  = 2{\kern 1pt} {\kern 1pt} {\rm{m}}$, ${\eta _{\rm{B}}} = 10$, ${\eta _{\rm{E}}} = 1$, $l = 4{\rm{m}}$, and $D = 15{\rm{m}}$. As can be observed, the ASC increase with the increase of $P$, which consists with that in Fig. \ref{fig5} and Fig. \ref{fig6}. Moreover, when $P = 35$  and 50 dB, the ASC varies fast with dimming target. Moreover, the larger the dimming target, the larger the ASC will be. However, when $P = 70$ dB, the values of ASC are almost the same for different dimming targets.

Moreover, it can be found from Fig. \ref{fig5}-Fig. \ref{fig7} that all theoretical results match simulation results very well, which indicates the correctness of the theoretical analysis.

\subsection{SOP results}
\label{section6_2}
Fig. \ref{fig8} shows SOP versus ${\eta _{\rm{B}}}$ with different ${\eta _{\rm{E}}}$ when $\rho  = 4{\rm{m}}$, $P=60$ dB, $l = 4{\rm{m}}$, $D = 8{\rm{m}}$, and ${\gamma _{{\rm{th}}}} = 3$.
As can be observed, the SOP performance improves with the increase of ${\eta _{\rm{B}}}$, and deteriorates with the increase of ${\eta _{\rm{E}}}$. This is because larger ${\eta _{\rm{B}}}$ and ${\eta _{\rm{E}}}$ will result in larger estimated channel gains ${\hat H_{\rm{B}}}$ and ${\hat H_{\rm{E}}}$, respectively. Larger ${\hat H_{\rm{B}}}$ leads to better secrecy performance while larger ${\hat H_{\rm{E}}}$ leads to worse secrecy performance.

\begin{figure}
\centering
\includegraphics[width=8cm]{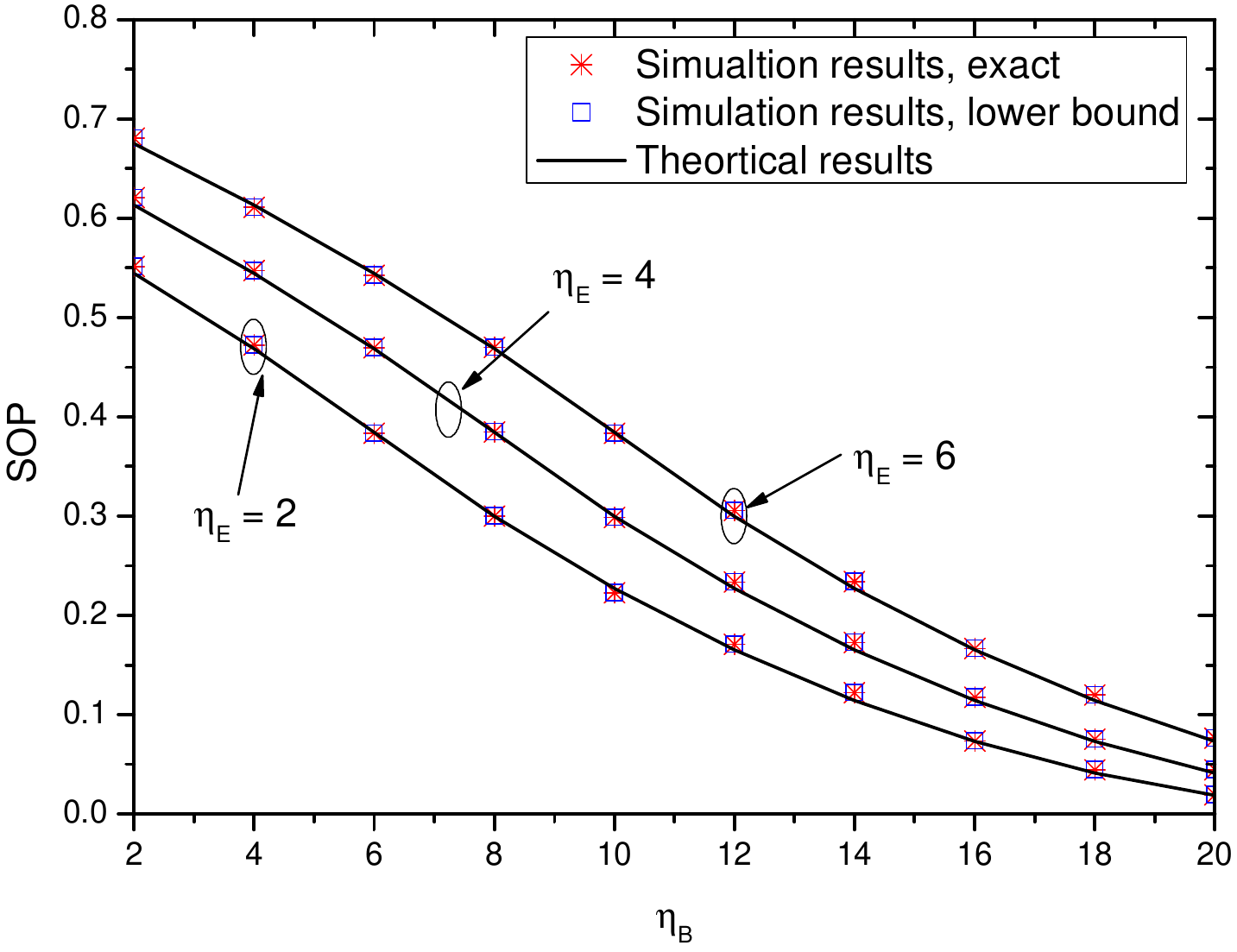}
\caption{SOP versus ${\eta _{\rm{B}}}$ with different ${\eta _{\rm{E}}}$ when $\rho  = 4{\rm{m}}$, $P=60$ dB, $l = 4{\rm{m}}$, $D = 8{\rm{m}}$, and ${\gamma _{{\rm{th}}}} = 3$.}
\label{fig8}
\end{figure}

Fig. \ref{fig9} shows SOP versus ${\eta _{\rm{B}}}$ with different $\rho$ when $P=60$ dB, $l=4{\rm m}$, $D = 8{\rm{m}}$, ${\gamma _{{\rm{th}}}} = 3$, and ${\eta _{\rm{E}}} = 1$.
It can be observed that the SOP performance improves with the increase of the radius of the protected zone $\rho$, which consists with that in \emph{Remark \ref{rem6}}. Expanding protected zone decreases the estimated eavesdropping channel gain ${\hat H_{\rm{E}}}$, which improves the secrecy performance of the VLC system.

\begin{figure}
\centering
\includegraphics[width=8cm]{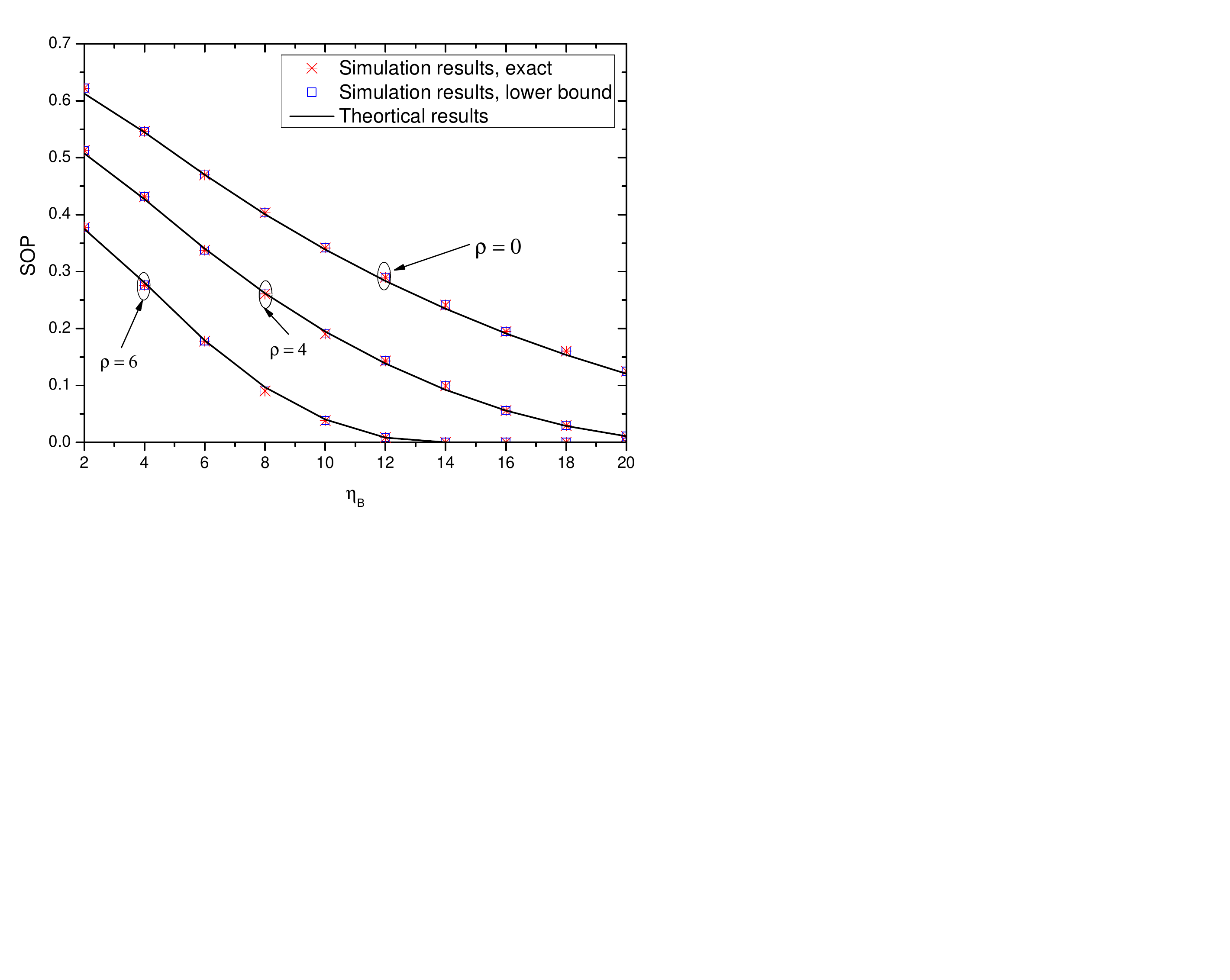}
\caption{SOP versus ${\eta _{\rm{B}}}$ with different $\rho$ when $P=60$ dB, $l=4{\rm m}$, $D = 8{\rm{m}}$, ${\gamma _{{\rm{th}}}} = 3$, and ${\eta _{\rm{E}}} = 1$.}
\label{fig9}
\end{figure}

Fig. \ref{fig10} shows SOP versus ${\eta _{\rm{B}}}$ with different ${\gamma _{{\rm{th}}}}$ when $P=60$ dB, $l = 4{\rm{m}}$, $D = 8{\rm{m}}$, ${\eta _{\rm{E}}} = 1$, and $\rho  = 4{\rm{m}}$. It can be observed that the value of SOP becomes larger as the increase of ${\gamma _{{\rm{th}}}}$, which means that the system secrecy performance is degrades. The conclusion derived in Fig. \ref{fig10} verifies the correctness of \emph{Remark \ref{rem5}}.

\begin{figure}
\centering
\includegraphics[width=8cm]{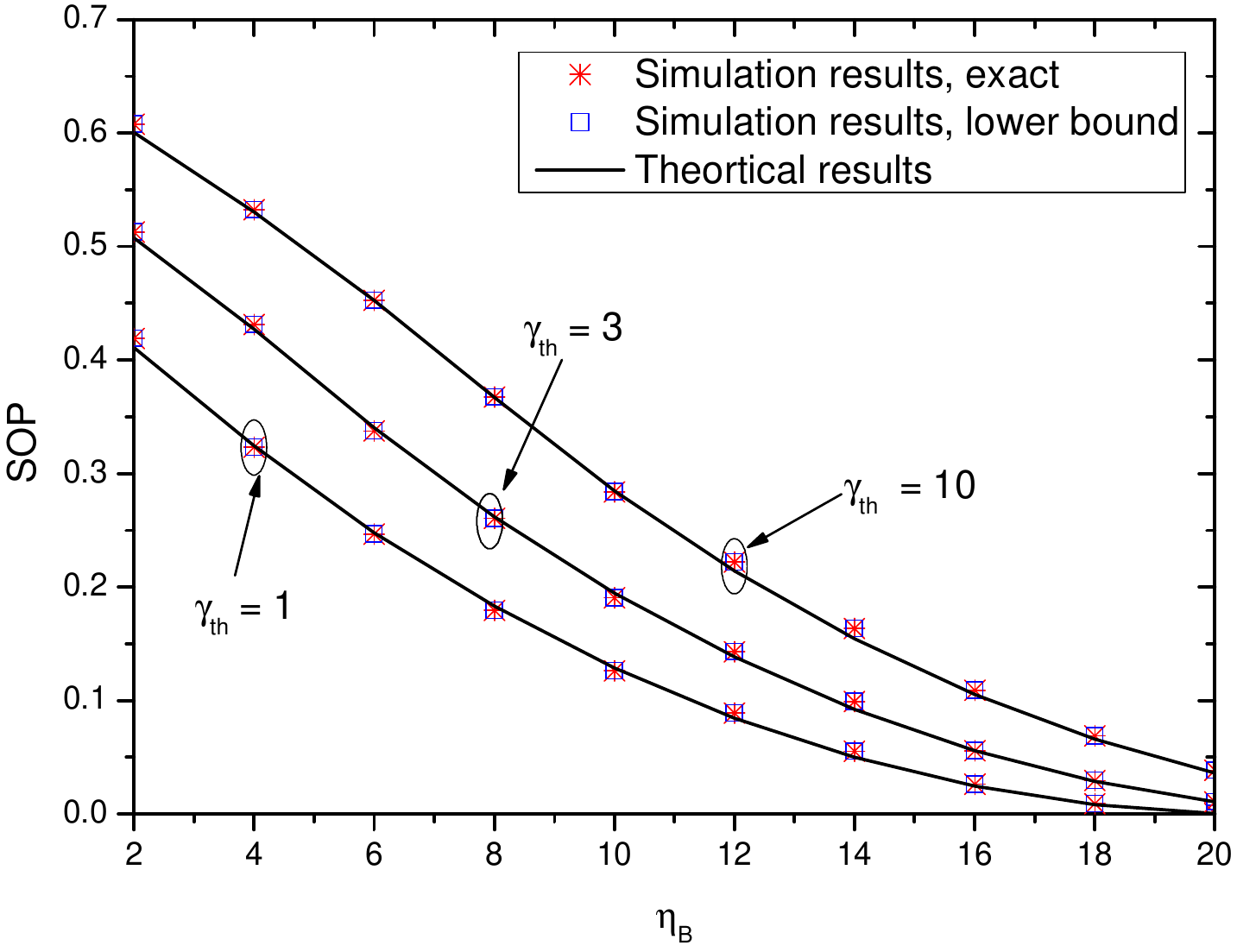}
\caption{SOP versus ${\eta _{\rm{B}}}$ with different ${\gamma _{{\rm{th}}}}$ when $P=60$ dB, $l = 4{\rm{m}}$, $D = 8{\rm{m}}$, ${\eta _{\rm{E}}} = 1$, and $\rho  = 4{\rm{m}}$.}
\label{fig10}
\end{figure}

In Fig. \ref{fig8}-Fig. \ref{fig10}, it can be observed that the theoretical values for the lower bound of SOP agree with the simulation values for the lower bound, which verifies that the theoretical analysis is correct. At the same time, the gap between the theoretical lower bound and the exact simulation result is also small, which indicates that the scaling method for deriving the lower bound of SOP is reasonable.

\section{Conclusions}
\label{section7}
This paper investigates the PLS performance for a random VLC network with Alice, Bob, and Eve. By considering the protected zone, imperfect CSIR, and the randomness of Bob and Eve, the closed-form expressions of the ASC and the lower bound of the SOP are derived, respectively. Moreover, some insights are obtained, which can be utilized for practical system design. Numerical results show that the gaps between theoretical results and simulation results are small, which indicates that the derived theoretical expressions can be used to evaluate PLS for VLC. Moreover, it is shown that the nominal optical intensity, the dimming target, the protected zone and the imperfect CSI have strong impacts on the secrecy performance. The derived theoretical expressions and results regarding parameters that influence the secrecy performance will enable system designers to quickly evaluate system performance without time-consuming simulations and determine the optimal available parameter choices when facing different security risks.

\numberwithin{equation}{section}
\appendices
\section{Proof of Theorem \ref{them1}}
\label{appa}
\renewcommand{\theequation}{A.\arabic{equation}}
According to Fig. \ref{fig1}, the projection of Alice on the receiver zone is the center of zone ${\cal S}$, and Bob is uniformly distributed in zone ${\cal S}$. Therefore, the cumulative distribution function (CDF) of ${r_{\rm{B}}}$ is given by
\begin{eqnarray}
{F_{{r_{\rm{B}}}}}({r_{\rm{B}}}) = \int_0^{2\pi } {\int_0^{{r_{\rm{B}}}} {\frac{1}{{\pi {D^2}}}r{\rm{d}}r{\rm{d}}\theta } }= \frac{{r_{\rm{B}}^2}}{{{D^2}}},0 \le {r_{\rm{B}}} \le D.
 \label{eq37}
\end{eqnarray}

By using (\ref{eq37}), the PDF of ${r_{\rm{B}}}$ is further written as
\begin{eqnarray}
{f_{{r_{\rm{B}}}}}({r_{\rm{B}}}) =\frac{{{\rm{d}}{F_{{r_{\rm{B}}}}}({r_{\rm{B}}})}}{{{\rm{d}}{r_{\rm{B}}}}} = \frac{{2{r_{\rm{B}}}}}{{{D^2}}},\;0 \le {r_{\rm{B}}} \le D.
 \label{eq38}
\end{eqnarray}

According to (\ref{eq3}) and (\ref{eq38}), the PDF of ${H_{\rm{B}}}$ can be derived as
\begin{eqnarray}
{f_{{H_{\rm{B}}}}}(h) \!\!\!\!\!& =&\! \!\!\!\!\frac{{\rm{d}}}{{{\rm{d}}h}}\!\!\left(\! { - \sqrt {{{\left( {\frac{{(m + 1)A{T_s}g{l^{m{\rm{ + 1}}}}}}{{2\pi }}} \right)}^{\!\!\frac{2}{{m + 3}}}}{h^{ - \frac{2}{{m + 3}}}} - {l^2}} }\! \right)\nonumber \\
 &\times& \!\!\!\!\! {f_{{r_{\rm{B}}}}}\!\!\left(\! {\sqrt {{{\left(\! {\frac{{(m \!+\! 1)A{T_s}g{l^{m{\rm{ + 1}}}}}}{{2\pi }}} \!\right)}^{\!\!\frac{2}{{m + 3}}}}{h^{ - \frac{2}{{m \!+\! 3}}}} \!-\! {l^2}} } \!\right)\!\!,
 \label{eq39}
\end{eqnarray}
Therefore, eq. (\ref{eq8}) holds.

Because Eve is uniformly distributed in zone ${\cal S}\backslash {\cal P}$, the CDF of ${r_{\rm{E}}}$ is given by
\begin{eqnarray}
{F_{{r_{\rm{E}}}}}({r_{\rm{E}}}) 
 = \frac{{r_{\rm{E}}^2 - {\rho ^2}}}{{{D^2} - {\rho ^2}}},\rho  \le {r_{\rm{E}}} \le D.
\label{eq40}
\end{eqnarray}

By using (\ref{eq37}), the PDF of ${r_{\rm{E}}}$ is further written as
\begin{eqnarray}
{f_{{r_{\rm{E}}}}}({r_{\rm{E}}}) 
= \frac{{2{r_{\rm{E}}}}}{{{D^2} - {\rho ^2}}},\;\rho  \le {r_{\rm{E}}} \le D.
 \label{eq41}
\end{eqnarray}

According to (\ref{eq3}) and (\ref{eq41}), the PDF of ${H_{\rm{E}}}$ can be derived as \cite{BIB24}
\begin{eqnarray}
{f_{{H_{\rm{E}}}}}(h)\!\!\!\!\!& =&\!\!\!\!\!\frac{{\rm{d}}}{{{\rm{d}}h}}\!\!\left(\!\! { - \sqrt {{{\left(\! {\frac{{(m \!+\! 1)A{T_s}g{l^{m{\rm{ + 1}}}}}}{{2\pi }}}\! \right)\!\!}^{\frac{2}{{m + 3}}}}{h^{ - \frac{2}{{m + 3}}}} - {l^2}} } \right)\nonumber \\
 &\times&\!\!\!\!\!\! {f_{{r_{\rm{E}}}}}\!\!\left(\!\! {\sqrt {{{\left(\! {\frac{{(m \!+\! 1)A{T_s}g{l^{m{\rm{ + 1}}}}}}{{2\pi }}} \!\right)\!\!}^{\frac{2}{{m + 3}}}}{h^{ - \frac{2}{{m + 3}}}} \!-\! {l^2}} }\! \right)\!\!,
\label{eq42}
\end{eqnarray}
Therefore, eq. (\ref{eq9}) can be easily derived.

\section{Proof of Theorem \ref{them2}}
\label{appb}
\renewcommand{\theequation}{B.\arabic{equation}}
Case 1: When $\chi  \le {{{v_{\min }}} \mathord{\left/
 {\vphantom {{{v_{\min }}} {{v_{\max }}}}} \right.
 \kern-\nulldelimiterspace} {{v_{\max }}}}$, by observing Fig. \ref{fig3}, $\overline {{C_s}} {\rm{ = 0}}$ is straightforward.

Case 2: When ${v_{\min }}{\rm{/}}{v_{\max }} < \chi  \le {v_{{\mathop{\rm mi}\nolimits} {\rm{d}}}}{\rm{/}}{v_{\max }}$, we have
\begin{eqnarray}
&&\!\!\!\!\!\!\!\!\!\!\!\!\!\!{C_1} \!\!=\!\!\! \underbrace {\int_{\frac{{{v_{\min }}}}{\chi }}^{{v_{{\rm{max}}}}}\!\!\! {\frac{1}{2}\!\!\ln \!\!\left[\! {\sigma _{\rm{E}}^2\!\!\left(\! {e{\xi ^2}\!{P^2}\!{\rm{1}}{{\rm{0}}^{\frac{{{\eta _{\rm{B}}}}}{5}}}\!{x^2} \!\!+\!\! 2\pi \sigma _{\rm{B}}^2} \!\right)} \!\!\right]\!\!{f_{{H_{\rm{B}}}}}\!(x)\!\!\!\int_{{v_{\min }}}^{\chi x} \!\!\!\!{{f_{{H_{\rm{E}}}}}\!(y){\rm{d}}y} } {\rm{d}}x}_{{I_1}}\nonumber \\
 &&\!\!\!\!\!\!\!\!\!\!\!\!\!\! -\!\!\!\underbrace {\int_{{v_{\min }}}^{\chi {v_{{\rm{max}}}}}\!\!\! {\frac{1}{2}\!\!\ln\!\! \left[ {2\pi \sigma _{\rm{B}}^2\!\!\left(\!\! {{\rm{1}}{{\rm{0}}^{\frac{{{\eta _{\rm{E}}}}}{5}}}\!{y^2}\!{\xi ^2}\!{P^2} \!\!+\!\! \sigma _{\rm{E}}^2} \!\!\right)}\!\! \right]\!\!{f_{{H_{\rm{E}}}}}\!(y)\!\!\!\!\int_{\frac{y}{\chi }}^{{v_{{\rm{max}}}}}\!\!\!\!\!\!\!\! {{f_{{H_{\rm{B}}}}}\!(x){\rm{d}}x{\rm{d}}y} } }_{{I_2}},
\label{eq43}
\end{eqnarray}
where $I_1$ can be written as
\begin{eqnarray}
{I_1} \!\!\!\!&=&\!\!\!\! \frac{{(m \!+\! 3){\Xi _{\rm{1}}}{\Xi _{\rm{2}}}}}{4}\!\!\left[\! v_{\min }^{ - \frac{2}{{m + 3}}}\lambda\!\! \left(\!\! {\frac{{m \!+\! 3}}{2},\frac{{{v_{\min }}}}{\chi },{v_{\max }},\frac{{e{\rm{1}}{{\rm{0}}^{\frac{{{\eta _{\rm{B}}}}}{5}}}}}{{2\pi \sigma _{\rm{B}}^2}}}\!\! \right) \right.\nonumber\\
\!&-&\!\!\!\!\left. {\chi ^{ - \frac{2}{{m + 3}}}}\lambda \!\!\left(\!\! {\frac{{m \!+\! 3}}{{\rm{4}}},\frac{{{v_{\min }}}}{\chi },{v_{\max }},\frac{{e{\rm{1}}{{\rm{0}}^{\frac{{{\eta _{\rm{B}}}}}{5}}}}}{{2\pi \sigma _{\rm{B}}^2}}}\!\! \right)\!\! \right]\!.
\label{eq49}
\end{eqnarray}

Similarly, $I_2$ in (\ref{eq43}) can be derived as
\begin{eqnarray}
{I_2} \!\!\!\!&=&\!\!\!\! \frac{{(m \!+\! 3){\Xi _{\rm{1}}}{\Xi _{\rm{2}}}}}{4}\!\!\left[\!\! {\chi ^{\frac{2}{{m + 3}}}}\lambda\!\! \left(\!\! {\frac{{m \!+\! 3}}{{\rm{4}}},{v_{\min }},\chi {v_{\max }},\frac{{{\rm{1}}{{\rm{0}}^{\frac{{{\eta _{\rm{E}}}}}{5}}}}}{{\sigma _{\rm{E}}^2}}}\!\! \right) \right.\nonumber\\
\!&-&\!\!\!\! \left.v_{{\rm{max}}}^{ - \frac{2}{{m + 3}}}\lambda\!\! \left(\!\! {\frac{{m \!+\! 3}}{{\rm{2}}},{v_{\min }},\chi {v_{\max }},\frac{{{\rm{1}}{{\rm{0}}^{\frac{{{\eta _{\rm{E}}}}}{5}}}}}{{\sigma _{\rm{E}}^2}}}\!\! \right)\!\! \right]\!.
 \label{eq55}
\end{eqnarray}
By submitting (\ref{eq49}) and (\ref{eq55}) into (\ref{eq43}), $C_1$ can be derived.

Case 3: when ${v_{{\mathop{\rm mid}\nolimits} }}{\rm{/}}{v_{\max }} < \chi  \le 1$, we can get (\ref{eq56}) as shown at the top of the next page,
\begin{table*}\normalsize
\begin{eqnarray}
{C_{\rm{2}}} \!\!\!\!\!\!&=&\!\!\!\!\!\!\! \frac{{{\Xi _{\rm{1}}}\!{\Xi _{\rm{2}}}}}{2}\!\!\!\left\{\!\! {\underbrace {\int_{\frac{{{v_{{\rm{min}}}}}}{\chi }}^{\frac{{{v_{{\rm{mid}}}}}}{\chi }}\!\!\! {\ln\!\! \left[\! {\sigma _{\rm{E}}^2\!\!\left(\! {e{\xi ^2}{P^2}{\rm{1}}{{\rm{0}}^{\frac{{{\eta _{\rm{B}}}}}{5}}}{x^2} \!+\! 2\pi \sigma _{\rm{B}}^2} \!\!\right)}\!\! \right]\!\!{x^{ - \frac{m\!+\!5}{{m \!+\! 3}} }}\!\!\!\int_{{v_{\min }}}^{\chi x}\!\!\! {{y^{ - \frac{m\!+\!5}{{m \!+\! 3}} }}{\rm{d}}y} } {\rm{d}}x}_{{D_1}}} \right. \!+\! \underbrace {\int_{\frac{{{v_{{\rm{mid}}}}}}{\chi }}^{{v_{{\rm{max}}}}}\!\!\! {\ln\!\! \left[\! {\sigma _{\rm{E}}^2\!\!\left(\!\! {e{\xi ^2}{P^2}{\rm{1}}{{\rm{0}}^{\frac{{{\eta _{\rm{B}}}}}{5}}}{x^2} \!+\! 2\pi \sigma _{\rm{B}}^2} \!\!\right)}\!\! \right]\!\!{x^{ - \frac{m\!+\!5}{{m \!+\! 3}} }}\!\!\!\int_{{v_{\min }}}^{{v_{{\rm{mid}}}}}\!\!\!\!\! {{y^{ - \frac{m\!+\!5}{{m \!+\! 3}} }}{\rm{d}}y} } {\rm{d}}x}_{{D_2}}\nonumber \\
&-&\!\!\!\!\left. {  \underbrace {\int_{{v_{\min }}}^{{v_{{\rm{mid}}}}} {\ln \left[ {2\pi \sigma _{\rm{B}}^2\left( {{\rm{1}}{{\rm{0}}^{\frac{{{\eta _{\rm{E}}}}}{5}}}{y^2}{\xi ^2}{P^2} + \sigma _{\rm{E}}^2} \right)} \right]{y^{ - \frac{m\!+\!5}{{m + 3}}}}\int_{\frac{y}{\chi }}^{{v_{{\rm{max}}}}} {{x^{ - \frac{m\!+\!5}{{m + 3}} }}{\rm{d}}x{\rm{d}}y} } }_{{D_3}}} \right\},
\label{eq56}
\end{eqnarray}
\hrulefill
\end{table*}
where $D_1$ can be written as
\begin{eqnarray}
{D_1}\!\!&=&\!\! \frac{{m \!+\! 3}}{2}\!\!\left[\! v_{\min }^{ - \frac{2}{{m + 3}}}\lambda\!\! \left(\!\! {\frac{{m \!+\! 3}}{2},\frac{{{v_{\min }}}}{\chi },\frac{{{v_{{\rm{mid}}}}}}{\chi },\frac{{e{\rm{1}}{{\rm{0}}^{\frac{{{\eta _{\rm{B}}}}}{5}}}}}{{2\pi \sigma _{\rm{B}}^2}}}\!\! \right) \right. \nonumber\\
\!&-&\! \left.{\chi ^{ - \frac{2}{{m + 3}}}}\lambda\!\! \left(\!\! {\frac{{m \!+\! 3}}{4},\frac{{{v_{\min }}}}{\chi },\frac{{{v_{{\rm{mid}}}}}}{\chi },\frac{{e{\rm{1}}{{\rm{0}}^{\frac{{{\eta _{\rm{B}}}}}{5}}}}}{{2\pi \sigma _{\rm{B}}^2}}} \!\!\right)\!\! \right]\!.
\label{eq58}
\end{eqnarray}

For $D_2$ in (\ref{eq56}), we have
\begin{eqnarray}
{D_2} \!\!=\!\!\frac{{m \!+\! 3}}{2}\!\!\left(\!\! {v_{\min }^{ - \frac{2}{{m \!+\! 3}}} \!\!-\!\! v_{{\mathop{\rm mi}\nolimits} {\rm{d}}}^{ - \frac{2}{{m \!+\! 3}}}} \!\!\right)\!\lambda\!\! \left(\!\! {\frac{{m \!+\! 3}}{2},\frac{{{v_{{\rm{mid}}}}}}{\chi },{v_{\max }},\frac{{e{\rm{1}}{{\rm{0}}^{\frac{{{\eta _{\rm{B}}}}}{5}}}}}{{2\pi \sigma _{\rm{B}}^2}}} \!\!\right)\!\!.
\label{eq59}
\end{eqnarray}

For $D_3$ in (\ref{eq56}), we have
\begin{eqnarray}
{D_3} \!\!&=&\!\! \frac{{m \!+\! 3}}{2}\!\!\left[\! {\chi ^{\frac{2}{{m + 3}}}}\lambda\!\! \left(\!\! {\frac{{m \!+\! 3}}{4},{v_{\min }},{v_{{\rm{mid}}}},\frac{{{\rm{1}}{{\rm{0}}^{\frac{{{\eta _{\rm{E}}}}}{5}}}}}{{\sigma _{\rm{E}}^2}}}\!\! \right)\right. \nonumber\\
\!&-&\! \left.v_{{\rm{max}}}^{ - \frac{2}{{m + 3}}}\lambda\!\! \left(\!\! {\frac{{m \!+\! 3}}{2},{v_{\min }},{v_{{\rm{mid}}}},\frac{{{\rm{1}}{{\rm{0}}^{\frac{{{\eta _{\rm{E}}}}}{5}}}}}{{\sigma _{\rm{E}}^2}}}\!\! \right)\!\! \right]\!.
\label{eq61}
\end{eqnarray}
By submitting (\ref{eq58}), (\ref{eq59}) and (\ref{eq61}) into (\ref{eq56}), $C_2$ can be derived.

Case 4: When $1 < \chi  \le {v_{{\mathop{\rm mid}\nolimits} }}{\rm{/}}{v_{\min }}$, we have
\begin{eqnarray}
{C_3} &=& \underbrace {\int_{{v_{\min }}}^{\frac{{{v_{{\mathop{\rm mid}\nolimits} }}}}{\chi }} {{f_{{H_{\rm{B}}}}}(x)\int_{{v_{\min }}}^{\chi x} {C_{s,2}^{{\rm{Low}}}{f_{{H_{\rm{E}}}}}(y){\rm{d}}y} } {\rm{d}}x}_{{A_1}} \nonumber\\
&+& \underbrace {\int_{\frac{{{v_{{\mathop{\rm m}\nolimits} {\rm{id}}}}}}{\chi }}^{{v_{{\rm{max}}}}} {{f_{{H_{\rm{B}}}}}(x)\int_{{v_{\min }}}^{{v_{{\rm{mid}}}}} {C_{s,2}^{{\rm{Low}}}{f_{{H_{\rm{E}}}}}(y){\rm{d}}y} } {\rm{d}}x}_{{A_2}},
\label{eq62}
\end{eqnarray}
where $A_1$ and $A_2$ can be written as
\begin{eqnarray}
&&\!\!\!\!\!\!\!\!\!\!\!\!\!\!{A_1} \!\!=\!\! \frac{{(m \!\!+\!\! 3){\Xi _{\rm{1}}}{\Xi _{\rm{2}}}}}{4}\!\!\left\{\!\! v_{\min }^{ - \frac{2}{{m \!+\! 3}}}\lambda\!\! \left(\!\! {\frac{{m \!+\! 3}}{2},\!{v_{\min }},\!\frac{{{v_{{\mathop{\rm mid}\nolimits} }}}}{\chi },\!\frac{{e{\rm{1}}{{\rm{0}}^{\frac{{{\eta _{\rm{B}}}}}{5}}}}}{{2\pi \sigma _{\rm{B}}^2}}} \!\!\right) \!-\! {\chi ^{ - \frac{2}{{m \!+\! 3}}}} \right.\nonumber\\
&&\!\!\!\!\!\!\!\!\!\!\!\!\!\! \times \lambda\!\! \left(\!\! {\frac{{m \!+\! 3}}{4},{v_{\min }},\frac{{{v_{{\mathop{\rm mid}\nolimits} }}}}{\chi },\frac{{e{\rm{1}}{{\rm{0}}^{\frac{{{\eta _{\rm{B}}}}}{5}}}}}{{2\pi \sigma _{\rm{B}}^2}}}\!\! \right)-\left[\! {v_{\min }^{ - \frac{2}{{m + 3}}} \!-\! {{\left(\!\! {\frac{{{v_{{\rm{mid}}}}}}{\chi }} \!\!\right)\!}^{ - \frac{2}{{m + 3}}}}}\! \right] \nonumber \\
 &&\!\!\!\!\!\!\!\!\!\!\!\!\!\!\times \!\lambda \!\!\left(\!\! {\frac{{m \!\!+\!\! 3}}{2},\!{v_{\min }},\!\chi {v_{\min }},\!\frac{{{\rm{1}}{{\rm{0}\!}^{\frac{{{\eta _{\rm{E}}}}}{5}}}}}{{\sigma _{\rm{E}}^2}}}\!\! \right) \!\!-\! {\chi ^{\frac{2}{{m \!+\! 3}}}}\!\lambda\!\! \left(\!\! {\frac{{m \!\!+\!\! 3}}{4},\!\chi {v_{\min }},\!{v_{{\rm{mid}}}},\!\frac{{{\rm{1}}{{\rm{0}\!}^{\frac{{{\eta _{\rm{E}}}}}{5}}}}}{{\sigma _{\rm{E}}^2}}}\!\! \right)\nonumber\\
 &&\!\!\!\!\!\!\!\!\!\!\!\!\!\!+\left. {{{\left( {\frac{{{v_{{\rm{mid}}}}}}{\chi }} \right)}^{ - \frac{2}{{m \!+\! 3}}}}\lambda \left(\!\! {\frac{{m \!+\! 3}}{2},\chi {v_{\min }},{v_{{\rm{mid}}}},\frac{{{\rm{1}}{{\rm{0}}^{\frac{{{\eta _{\rm{E}}}}}{5}}}}}{{\sigma _{\rm{E}}^2}}} \!\!\right)}\!\! \right\}.
\label{eq67}
\end{eqnarray}
and
\begin{eqnarray}
&&\!\!\!\!\!\!\!\!\!\!\!\!{A_2} \!\!=\!\! \frac{{(m \!\!+\!\! 3){\Xi _{\rm{1}}}\!{\Xi _{\rm{2}}}}}{{\rm{4}}}\!\!\!\left\{\!\!\! {\left(\!\! {v_{\min }^{ - \frac{2}{{m \!+\! 3}}} \!-\! v_{{\rm{mid}}}^{ - \frac{2}{{m \!+\! 3}}}}\!\! \right)\!\lambda\!\! \left(\!\! {\frac{{m \!\!+\!\! 3}}{2},\!\frac{{{v_{{\rm{mid}}}}}}{\chi },\!{v_{\max }},\!\frac{{e\!{\rm{1}}{{\rm{0}}^{\frac{{{\eta _{\rm{B}}}}}{5}}}}}{{2\pi \sigma _{\rm{B}}^2}}}\!\! \right)} \right.\nonumber \\
&&\!\!\!\!\!\!\!\!\!\!\!-\!\left.{  \left[\!\! {{{\left(\!\! {\frac{{{v_{{\rm{mid}}}}}}{\chi }}\!\! \right)}^{ - \frac{2}{{m \!+\! 3}}}} \!\!-\! v_{\max }^{ - \frac{2}{{m \!+\! 3}}}} \!\!\right]\!\lambda\!\! \left(\!\! {\frac{{m \!+\! 3}}{2},{v_{\min }},{v_{{\rm{mid}}}},\frac{{{\rm{1}}{{\rm{0}}^{\frac{{{\eta _{\rm{E}}}}}{5}}}}}{{\sigma _{\rm{E}}^2}}} \!\!\right)}\!\! \right\}.
\label{eq71}
\end{eqnarray}

Submitting (\ref{eq67}) and (\ref{eq71}) into (\ref{eq62}), $C_3$ can be finally derived.

Case 5: When $\chi  > {v_{{\mathop{\rm mid}\nolimits} }}{\rm{/}}{v_{\min }}$, we have
\begin{equation}
{C_4}{\rm{ = }}{\left. {{A_2}} \right|_{\chi  = \frac{{{v_{{\mathop{\rm mid}\nolimits} }}}}{{{v_{\min }}}}}}
\label{eq72}
\end{equation}
After some simplifications, $C_4$ can be finally derived.

\section{Proof of Theorem \ref{them3}}
\label{appc}
\renewcommand{\theequation}{C.\arabic{equation}}
Case 1: When ${\gamma _{{\rm{th}}}} \le {\chi ^2}\frac{{v_{\min }^2}}{{v_{{\rm{mid}}}^2}}$, we have $P_{{\rm{SOP}}}^{\rm{L}} = 0$.

Case 2: When ${\chi ^2}\frac{{v_{\min }^2}}{{v_{{\rm{mid}}}^2}} \le {\gamma _{{\rm{th}}}} \le {\chi ^2}$, we have
\begin{eqnarray}
P_{{\rm{SOP}}}^{\rm{L}} 
\!\!\!\!\! &=&\!\!\!\!\! \frac{{{\Xi _1}{\Xi _{\rm{2}}}}}{4}{\left( {\frac{{2\pi \sigma _{\rm{B}}^2\sigma _{\rm{E}}^2}}{{{\rm{1}}{{\rm{0}}^{\frac{{{\eta _{\rm{E}}} + {\eta _{\rm{B}}}}}{5}}}e{\xi ^4}{P^4}}}} \right)^{ - \frac{1}{{m + 3}}}}{(m + 3)^2}\nonumber\\
&\times&\!\!\!\!\!J_{{\rm{B,min}}}^{ - \frac{1}{{m + 3}}}\left[ {{{\left( {\frac{{{J_{{\rm{B,min}}}}}}{{{\gamma _{{\rm{th}}}}}}} \right)}^{ - \frac{1}{{m + 3}}}} - J_{{\rm{E,max}}}^{ - \frac{1}{{m + 3}}}} \right]\nonumber \\
 &-& \frac{{{\Xi _1}{\Xi _{\rm{2}}}}}{8}{\left( {\frac{{2\pi \sigma _{\rm{B}}^2\sigma _{\rm{E}}^2}}{{{\rm{1}}{{\rm{0}}^{\frac{{{\eta _{\rm{E}}} + {\eta _{\rm{B}}}}}{5}}}e{\xi ^4}{P^4}}}} \right)^{ - \frac{1}{{m + 3}}}}{(m + 3)^2}\nonumber\\
 &\times&\!\!\!\!\!\gamma _{{\rm{th}}}^{ - \frac{1}{{m + 3}}}\left[ {{{\left( {\frac{{{J_{{\rm{B,min}}}}}}{{{\gamma _{{\rm{th}}}}}}} \right)}^{ - \frac{2}{{m + 3}}}} - J_{{\rm{E,max}}}^{ - \frac{2}{{m + 3}}}} \right].
 \label{eq73}
\end{eqnarray}

Case 3: When ${\chi ^2} \le {\gamma _{{\rm{th}}}} \le {\chi ^2}\frac{{v_{\max }^2}}{{v_{{\rm{mid}}}^2}}$, we have
\begin{eqnarray}
P_{{\rm{SOP}}}^{\rm{L}} 
\!\!\!\!\! &=&\!\!\!\!\! \frac{{{\Xi _1}{\Xi _{\rm{2}}}}}{4}{\left( {\frac{{2\pi \sigma _{\rm{B}}^2\sigma _{\rm{E}}^2}}{{{\rm{1}}{{\rm{0}}^{\frac{{{\eta _{\rm{E}}} + {\eta _{\rm{B}}}}}{5}}}e{\xi ^4}{P^4}}}} \right)^{ - \frac{1}{{m + 3}}}}{(m + 3)^2}\nonumber\\
&\times&\!\!\!\!\!J_{{\rm{B,min}}}^{ - \frac{1}{{m + 3}}}\left[ {J_{{\rm{E,min}}}^{ - \frac{1}{{m + 3}}} - J_{{\rm{E,max}}}^{ - \frac{1}{{m + 3}}}} \right]\nonumber \\
 &-& \!\!\!\!\!\frac{{{\Xi _1}{\Xi _{\rm{2}}}}}{8}{\left( {\frac{{2\pi \sigma _{\rm{B}}^2\sigma _{\rm{E}}^2}}{{{\rm{1}}{{\rm{0}}^{\frac{{{\eta _{\rm{E}}} + {\eta _{\rm{B}}}}}{5}}}e{\xi ^4}{P^4}}}} \right)^{ - \frac{1}{{m + 3}}}}{(m + 3)^2}\nonumber\\
 &\times&\gamma _{{\rm{th}}}^{ - \frac{1}{{m + 3}}}\left[ {J_{{\rm{E,min}}}^{ - \frac{2}{{m + 3}}} - J_{{\rm{E,max}}}^{ - \frac{2}{{m + 3}}}} \right].
\label{eq74}
\end{eqnarray}

Case 4: When ${\chi ^2}\frac{{v_{\max }^2}}{{v_{{\rm{mid}}}^2}} \le {\gamma _{{\rm{th}}}} \le {\chi ^2}\frac{{v_{\max }^2}}{{v_{\min }^2}}$, we can get
\begin{eqnarray}
P_{{\rm{SOP}}}^{\rm{L}} \!\!\!\!\!
 &=&\!\!\!\!\! \frac{{{\Xi _1}{\Xi _{\rm{2}}}}}{4}{\left( {\frac{{2\pi \sigma _{\rm{B}}^2\sigma _{\rm{E}}^2}}{{{\rm{1}}{{\rm{0}}^{\frac{{{\eta _{\rm{E}}} + {\eta _{\rm{B}}}}}{5}}}e{\xi ^4}{P^4}}}} \right)^{ - \frac{1}{{m + 3}}}}{(m + 3)^2}\nonumber\\
 &\times& \!\!\!\!\! J_{{\rm{B,min}}}^{ - \frac{1}{{m + 3}}}\left[ {J_{{\rm{E,min}}}^{ - \frac{1}{{m + 3}}} - {{\left( {\frac{{{J_{{\rm{B,max}}}}}}{{{\gamma _{{\rm{th}}}}}}} \right)}^{ - \frac{1}{{m + 3}}}}} \right]\nonumber \\
 &-&\!\!\!\! \frac{{{\Xi _1}{\Xi _{\rm{2}}}}}{8}{\left( {\frac{{2\pi \sigma _{\rm{B}}^2\sigma _{\rm{E}}^2}}{{{\rm{1}}{{\rm{0}}^{\frac{{{\eta _{\rm{E}}} + {\eta _{\rm{B}}}}}{5}}}e{\xi ^4}{P^4}}}} \right)^{ - \frac{1}{{m + 3}}}}{(m + 3)^2}\nonumber\\
 &\times& \!\!\!\!\!\gamma _{{\rm{th}}}^{ - \frac{1}{{m + 3}}}\left[ {J_{{\rm{E,min}}}^{ - \frac{2}{{m + 3}}} - {{\left( {\frac{{{J_{{\rm{B,max}}}}}}{{{\gamma _{{\rm{th}}}}}}} \right)}^{ - \frac{2}{{m + 3}}}}} \right]\nonumber \\
 &+&\!\!\!\! \frac{{{\Xi _1}{\Xi _{\rm{2}}}}}{4}{\left(\! {\frac{{2\pi \sigma _{\rm{B}}^2\sigma _{\rm{E}}^2}}{{{\rm{1}}{{\rm{0}}^{\frac{{{\eta _{\rm{E}}} + {\eta _{\rm{B}}}}}{5}}}e{\xi ^4}{P^4}}}} \!\right)^{\! - \frac{1}{{m + 3}}}}{(m \!+\! 3)^2}\!\!\nonumber\\
 &\times&\!\!\!\!\!\!\left[\! {J_{{\rm{B,min}}}^{ - \frac{1}{{m + 3}}} \!-\! J_{{\rm{B,max}}}^{ - \frac{1}{{m + 3}}}} \!\right]\!\!\left[\!\! {{{\left(\! {\frac{{{J_{{\rm{B,max}}}}}}{{{\gamma _{{\rm{th}}}}}}} \!\!\right)}^{\!\!\! - \frac{1}{{m \!+\! 3}}}} \!-\! J_{{\rm{E,max}}}^{ - \frac{1}{{m + 3}}}}\!\! \right]\!\!.
\label{eq75}
\end{eqnarray}

Case 5: When ${\gamma _{{\rm{th}}}} \ge {\chi ^2}\frac{{v_{\max }^2}}{{v_{\min }^2}}$, we have $P_{{\rm{SOP}}}^{\rm{L}} = 1$.

\end{document}